\def \bra#1{\mathinner{\langle{#1|}}}
\def \ket#1{\mathinner{|{#1}\rangle}}
\def \red #1 {\textcolor{red}{#1}}
\def \blue #1 {\textcolor{blue}{#1}}
\newcommand\Algphase[1]{%
\vspace*{-.5\baselineskip}\Statex\hspace*{\dimexpr-\algorithmicindent-2pt\relax}\rule{0.98\textwidth}{0.4pt}%
\Statex\hspace*{-\algorithmicindent}\textbf{#1}%
\vspace*{-.5\baselineskip}\Statex\hspace*{\dimexpr-\algorithmicindent-2pt\relax}\rule{0.98\textwidth}{0.4pt}%
}
\tikzstyle{operator} = [draw,fill=white,minimum size=1.5em] 
\tikzstyle{operator2}=[draw,fill=white, text width=0.6cm, minimum height=4cm] 
\tikzstyle{operator3}=[draw,fill=white, text width=0.8cm, minimum height=1cm] 
\tikzstyle{phase} = [draw, fill=white,shape=circle,minimum size=12pt,inner sep=0pt]
\tikzstyle{phase1} = [fill=black, shape=circle,minimum size=3.5pt,inner sep=0pt]
\tikzstyle{phase2} = [fill=blue, shape=circle,minimum size=7pt,inner sep=1pt]
\tikzstyle{phase11} = [fill=cyan, shape=circle,minimum size=7pt,inner sep=1pt]
\tikzstyle{phase12} = [fill=lime, shape=circle,minimum size=7pt,inner sep=1pt]
\tikzstyle{phase13} = [fill=pink, shape=circle,minimum size=7pt,inner sep=1pt]
\tikzstyle{phase14} = [fill=lightgray, shape=circle,minimum size=7pt,inner sep=1pt]
\tikzstyle{phase0} = [draw, fill=white, shape=circle,minimum size=3.5pt,inner sep=0pt]
\tikzstyle{ellipsis} = [fill,shape=circle,minimum size=2pt,inner sep=0pt]
\tikzset{meter/.append style={fill=white, draw, inner sep=3, rectangle, font=\vphantom{A}, minimum width=14, 
path picture={\draw[black] ([shift={(.05,.2)}]path picture bounding box.south west) to[bend left=40] ([shift={(-.05,.2)}]path picture bounding box.south east);\draw[black,-latex] ([shift={(0,.15)}]path picture bounding box.south) -- ([shift={(.15,-.08)}]path picture bounding box.north);}}}
\tikzset{cross/.style={path picture={\draw[thick,black](path picture bounding box.north) -- (path picture bounding box.south) (path picture bounding box.west) -- (path picture bounding box.east);
}},
crossx/.style={path picture={\draw[thick,black,inner sep=0pt]
(path picture bounding box.south east) -- (path picture bounding box.north west) (path picture bounding box.south west) -- (path picture bounding box.north east);
}},
circlewc/.style={draw,circle,cross,minimum width=0.3 cm},
}
\renewcommand{\boxed}[2][\fboxsep]{{%
  \setlength{\fboxsep}{#1\fboxsep}\fbox{\m@th$\displaystyle#2$}}}
\newtheorem{theorem}{Theorem}
\newtheorem{lemma}{Lemma}[theorem]
\newtheorem{res}{Result}
\newtheorem{proposition}{Proposition}
\begin{document}

%\preprint{APS/123-QED}

\title{Learning Parameterized Quantum Circuits with Quantum Gradient}% Force line breaks with \\

\author{Keren Li}
\email{likr@szu.edu.cn}
\affiliation{College of Physics and Optoelectronic Engineering, Shenzhen University, Shenzhen 518060, China}
% \affiliation{Quantum Science Center of Guangdong-Hong Kong-Macao Greater Bay Area (Guangdong), Shenzhen 518045. China}

\author{Yuanfeng Wang}
\affiliation{Quantum Science Center of Guangdong-Hong Kong-Macao Greater Bay Area (Guangdong), Shenzhen 518045, China}

\author{Pan Gao}
\email{gaopan@baqis.ac.cn}
\affiliation{Beijing Academy of Quantum Information Sciences, Beijing 100193, China}

\author{Shenggen Zheng}
\email{zhengshenggen@quantumsc.cn}
\affiliation{Quantum Science Center of Guangdong-Hong Kong-Macao Greater Bay Area (Guangdong), Shenzhen 518045, China}

\date{\today}% It is always \today, today,
             %  but any date may be explicitly specified

\begin{abstract}
Parameterized quantum circuits (PQCs) are crucial for quantum machine learning and circuit synthesis, enabling the practical implementation of complex quantum tasks. 
However, PQC learning has been largely confined to classical optimization methods, which suffer from issues like gradient vanishing.
In this work, we introduce a nested optimization model that leverages quantum gradient to enhance PQC learning for polynomial-type cost functions.
Our approach utilizes quantum algorithms to identify and overcome a type of gradient vanishing—a persistent challenge in PQC learning—by effectively navigating the optimization landscape.
We also mitigate potential barren plateaus of our model and manage the learning cost via restricting the optimization region. Numerically, we demonstrate the feasibility of the approach on two tasks: the Max-Cut problem and polynomial optimization. The method excels in generating circuits without gradient vanishing and effectively optimizes the cost function.
From the perspective of quantum algorithms, our model improves quantum optimization for polynomial-type cost functions, addressing the challenge of exponential sample complexity growth. 
\end{abstract}

\maketitle

%\tableofcontents

\section{Introduction}
\label{Intro}
Quantum circuit synthesis enables a wide range of applications, including unitary approximation, state preparation, ansatz design, and non-linear simulation~\cite{furrutter2024quantum}. 
Central to this process is the learning of \textit{parameterized quantum circuits} (PQCs), which incorporate adjustable parameters into quantum gates. By fine-tuning these parameters, PQCs can represent complex quantum states and operations, making them essential tools in quantum computation and quantum machine learning~\cite{benedetti2019parameterized}.
For the noisy intermediate-scale quantum devices, PQCs are significant due to their adaptability and potential for utilizing limited quantum resources. They are invaluable for mitigating noise and facilitating hybrid quantum-classical approaches~\cite{cerezo2021variational, bharti2022noisy}. For the fault-tolerant quantum computers, PQCs are essential to provide a universal, scalable, and adaptable framework for constructing subroutines of quantum protocols. For example, PQCs are fundamental in quantum machine learning, enabling a streamlined representation of quantum neural networks architectures~\cite{abbas2021power, beer2020training}.

% Finding an appropriate PQC is important—a task commonly known as circuit synthesis—which involves exploring both the circuit architecture and its parameters.
% Let $[U(\bm{\theta})]$ represent a PQC with an undetermined architecture and a set of parameters $\bm{\theta} = (\theta_1, \theta_2, \ldots)$.  
% The procedure first identifies the architecture that comprises the region of solution, denoted as $[U(\bm{\theta})] \rightarrow U(\bm{\theta})$, and then adjusts the gate parameters to optimize the cost function $C(\bm{\theta})$, represented as $U(\bm{\theta}) \rightarrow U(\bm{\theta}_o)$. Alternatively, both stages can be combined flexibly until the optimal circuit is obtained.
% In general, $C(\bm{\theta})$ quantifies the difference between the desired outcome and the actual outcome produced by the PQC. Specifically, to approximate a target state $|\psi_{\text{target}}\rangle$ from $|\psi_{\text{initial}}\rangle$, $C(\bm{\theta})$ can be defined as $1 - |\langle \psi_{\text{target}} | U(\theta) | \psi_{\text{initial}} \rangle|^2$. To approximate a target unitary operation $V$, $C(\theta)$ can be $\| U(\theta) - V \|^2$, where $\| \cdot \|$ denotes a suitable matrix norm. 
Finding an appropriate PQC is crucial—a task commonly known as circuit synthesis—which involves exploring both the circuit architecture and its parameters. 
However, learning PQCs using only classical optimization methods faces significant challenges. A typical way is to pre-define an ansatz and train it using a classical optimizer, with well-established techniques such as the parameter shift rule for gradient-based optimization~\cite{mitarai2018quantum}. As the size of the problem increases, pre-definition and train within this high-dimensional Hilbert space becomes computationally challenging. 
Moreover, the persistent issue of gradient vanishing, caused not only by barren plateaus but just by the exponentially expanding Hilbert space, remains a major obstacle in fully leveraging the potential of PQCs. Various protocols have been proposed to mitigate barren plateaus~\cite{wang2021noise, mcclean2018barren}, including the use of shallow or variable structure ansatz~\cite{cerezo2021cost, grimsley2023adaptive}, specialized initialization strategies~\cite{zhang2022escaping}, and tailored architectures~\cite{pesah2021absence}. However, methods to address gradient vanishing caused just by the exponentially expanding Hilbert space are scarce.

For the limitations of classical methods in overcoming these obstacles, naturally, it raises a question: \textit{How about leveraging quantum algorithms to learn PQCs?}
% A common approach is to choose a predefined ansatz and train it utilizing a classical optimizer, with well-established techniques such as the parameter shift rule for gradient-based optimization~\cite{mitarai2018quantum}. Various protocols have additionally been proposed to mitigate gradient vanishing due to barren plateaus~\cite{wang2021noise, mcclean2018barren}, including the use of shallow or variable structure ansätze~\cite{cerezo2021cost, grimsley2023adaptive}, specialized initialization strategies~\cite{zhang2022escaping}, and tailored architectures~\cite{pesah2021absence}. 

Given that research in this area is quite limited, in this work, we propose a novel approach that leverages quantum algorithms to enhance PQC learning. Specifically, we introduce a \textit{nested optimization model} (NOM) for polynomial-type cost functions, utilizing quantum gradient algorithms to optimize directly within the geometry of the Hilbert space. By extending quantum gradient algorithms~\cite{rebentrost2019quantum, li2021optimizing, li2020quantum, gao2021quantum} from the real domain to the complex domain, our model effectively explores the optimization landscape, identifying and overcoming the persistent gradient vanishing issues encountered in Section \ref{vanishing}.

Our main contributions are as follows:
First, we extend the quantum gradient algorithm from the real domain to the complex domain, creating a NOM for PQC learning with quantum resources.
Second, we address gradient vanishing in PQC learning caused by inadequate or suboptimal parameter configurations, thereby enhancing training efficiency.
Third, by managing the optimization region, we mitigate potential barren plateaus in the classical learning part of NOM. Additionally, by integrating a reinforcement learning (RL)-based method, we numerically demonstrate the effectiveness of the protocol.
Finally, the NOM can address the key issue related to sample efficiency in quantum gradient algorithm, reducing the overall resource overhead.

\section{Results}
Quantum circuit synthesis is the process of generating a quantum circuit that meets a specific  target. In this work, we consider the problem of minimizing a polynomial-type function where the variables are encoded in a quantum state generated by a PQC. We formulate the problem as,
\begin{eqnarray}\label{problem_state}
   \min \quad f(\bm{z}) = \bm{z}^{\dagger \otimes p} \mathcal{F} \bm{z}^{\otimes p},
\end{eqnarray}
where $\bm{z}=(1, z_1,\cdots, z_d)^T\in \mathbb{C}^{d+1}$ is with up to $d$ complex variables, and $f(\bm{z}): \mathbb{C}^{d+1} \rightarrow \mathbb{R}$ denotes a polynomial function with $2p$ degrees, characterized by a Hermitian matrix $\mathcal{F}$. 
$\bm{z}$ can be encoded in a quantum state, 
\begin{eqnarray} \label{amp_encoding}
  \ket{\bm{z}} = c_0 \ket{0} + \sum_{i=1}^{d} c_i \ket{i},
\end{eqnarray}
which is generated from a PQC, where $c_0=1/\sqrt{1+\sum_{i=1}^{d}|z_i|^2}$ and $c_i = c_0 \cdot z_i$. 
Minimization of  $f(\bm{z})$ is a polynomial optimization problem, which encompasses common scenarios encountered in both classical and quantum computation. 
When $\bm{z} \in \mathbb{R}^{d+1}$, it corresponds to the real-valued polynomial optimization domain, which has been studied in~\cite{rebentrost2019quantum, li2020quantum}. 
When $p = 1$ and $\mathcal{F}_{1,i} = \mathcal{F}_{j,1} = 0$ for $i, j \in [d+1] $, the problem simplifies to finding the ground state, where $\mathcal{F}$ encodes the information of Hamiltonian~\cite{farhi2014quantum, peruzzo2014variational}.

\subsection{Gradient Vanishing}
\label{vanishing}
% We consider a scenario  where the dimensionality of the parameters $\bm{\theta}$ is much lower than that of the Hilbert space. 
A local minimum of Eq.~(\ref{problem_state}) should satisfies such criterion,
\begin{equation} \label{condition_1}
    \|\nabla_{\bm{z}} f\| = \left\|\left(\frac{\partial f}{\partial z_1},\dots,\frac{\partial f}{\partial z_d}\right)^{T}\right\|=0.
\end{equation} 
In traditional method, the state $\ket{\bm{z}}$ is substituted with a $m$-variable ansatz, leading to a modified criterion,
\begin{equation}\label{condition_2}
     \|\nabla_{\bm{\theta}} f\| = \left\|\left(\frac{\partial f}{\partial \theta_1},\dots,\frac{\partial f}{\partial \theta_m}\right)^{T}\right\|=0.
\end{equation} 
A likely scenario is that $\|\nabla_{\bm{z}} f\| \neq 0$ but $\|\nabla_{\bm{\theta}} f\| = 0$. This arises because 
\begin{equation}\label{derivatives}
        \nabla_{\bm{\theta}} f = \frac{\partial \bm{z}}{\partial \bm{\theta}} \cdot \nabla_{\bm{z}} f, \quad \text{where}\quad \frac{\partial{\bm{z}}}{\partial \bm{\theta}} = \begin{pmatrix}
             \frac{\partial z_1}{\partial \theta_1} & \dots & \frac{\partial z_d}{\partial \theta_1} \\
             \vdots & \ddots & \vdots \\
             \frac{\partial z_1}{\partial \theta_m} & \dots & \frac{\partial z_d}{\partial \theta_m} 
    \end{pmatrix},
\end{equation}
indicates that $\nabla_{\bm{\theta}} f = 0$ is a necessary but not sufficient condition for $\nabla_{\bm{z}} f = 0$. In Result \ref{res2}, we evaluate this probability under certain assumptions.

\begin{res} \label{res2}
Given the norm of $\nabla_z f$ and any $i$-th row of $\frac{\partial{\bm{z}}}{\partial \bm{\theta}_i}$ bounded by $C$, which is independent of $d$, the dimension of the Hilbert space, we have
\begin{eqnarray}
    \mbox{Pr}(||\langle \frac{\partial{\bm{z}}}{\partial \bm{\theta}_i}|\nabla_z f\rangle||\geq \epsilon ) \leq 8 e^{-\frac{d\epsilon^2}{128 C^2}},
\end{eqnarray}
which implies that the probability of the inner product $\langle \frac{\partial{\bm{z}}}{\partial \bm{\theta}_i} | \nabla_z f \rangle$ exceeding a certain threshold decays exponentially as $d$ increases.
Furthermore, with the number of parameters $\bm{\theta}$ denoted as $m$, we have 
$||\bm{\nabla_\theta}f||=\sqrt{\sum_{i=1}^{m}||\langle \frac{\partial{\bm{z}}}{\partial \bm{\theta}_i} | \bm{\nabla_z}f\rangle ||^2}$, which leads to
\begin{eqnarray}
    &&\text{Pr}\left(||\bm{\nabla_\theta}f||\geq  \epsilon \right)\leq 8m \cdot  e^{-\frac{d \epsilon^2}{128 C^2 m}}
\end{eqnarray}

\end{res}
\begin{proof}
    They are provided in Appendix \ref{supp:gradient_vanishing}~\cite{supp}. 
\end{proof}

Result \ref{res2} indicates that in variational quantum algorithms, when the number of parameters is significantly smaller than the Hilbert space dimension (commonly, $m$ scales polynomially with $\log(d)$), the gradient with respect to the parameters will decay exponentially with high probability as the Hilbert space dimension increases.
Additionally, the assumptions given here are reasonable in common scenarios: $m \sim \log(d)$ is necessary for constructing an executable PQC, and the limited norms on $\frac{\partial \bm{z}}{\partial \theta}$ and $\nabla_z f$ reflect the reality. As a result, the gradient vanishing issue described earlier poses a significant challenge throughout traditional methods.
% Consider a gradient vanishing scenario: if both the number of non-zero elements in $\nabla_{\bm{z}} f$, $\frac{\partial{\bm{z}}}{\partial \bm{\theta}}$ are less than $poly(\log(d))$, the probability of gradient vanishing is approximately $(1 - poly(\log(d)/d))^{poly(\log(d))}$. 

\subsection{Nested Optimization Model (NOM)}
\label{theory}

\begin{figure*}[!ht]
   \centering
   \includegraphics[width=1.8\columnwidth]{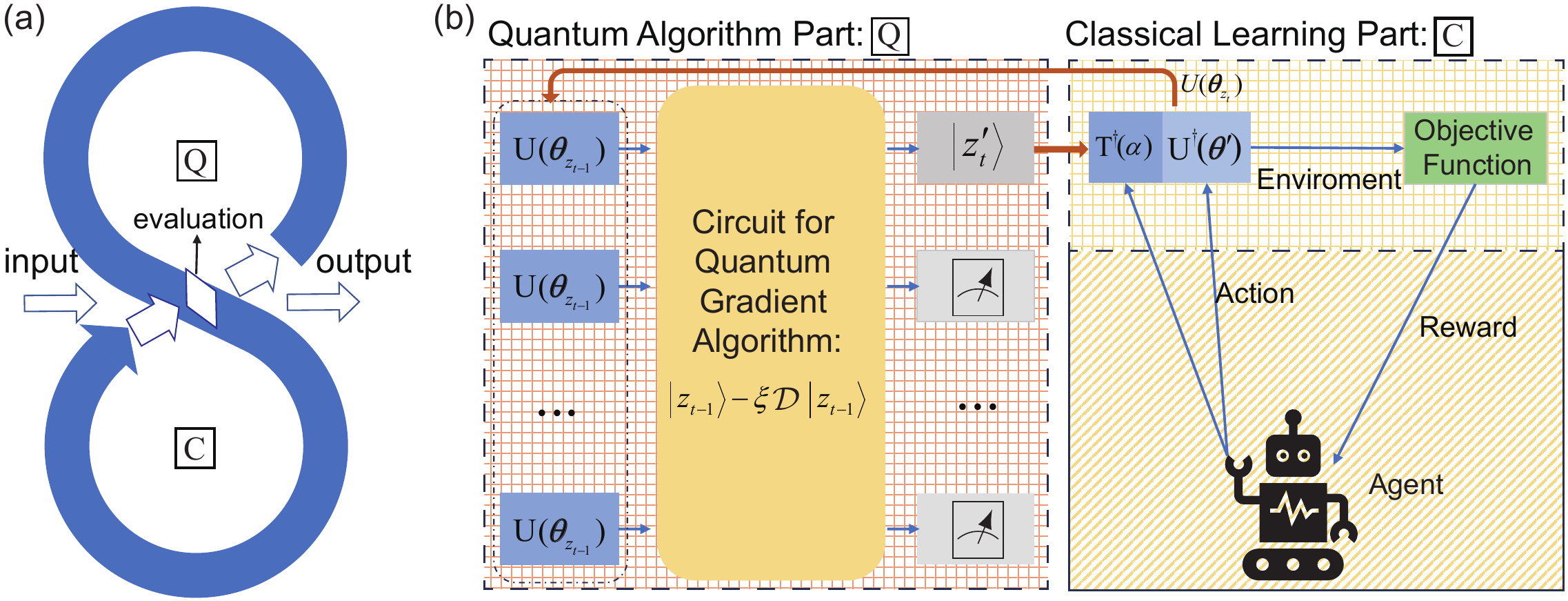}
   \caption{The Flow of NOM. (a) illustrates the protocol flow, where $\boxed[.5]{Q}$ and $\boxed[.5]{C}$ represent the quantum algorithm and classical learning parts, respectively. (b) depicts the process of the $t$-th iteration in the protocol. In this step, $\boxed[.5]{Q}$ passes the state $\ket{\bm{z}'}$ to $\boxed[.5]{C}$, and $\boxed[.5]{C}$ conditionally returns $U(\bm{\theta}_{z_{t}})$ to $\boxed[.5]{Q}$. The dashed arrow indicates a decision point, corresponding to the diamond shape shown in (a). } 
\label{fig1} 
\end{figure*}

To above problem, we introduce a NOM (depicted in Figure \ref{fig1}(a)), which integrates the PQC learning with quantum gradient. The protocol is heuristic, which requires an initial PQC as input, and two subroutines for iterative learning: $\boxed[.5]{Q}$ and $\boxed[.5]{C}$, depicted in Figure \ref{fig1}(b). The label $\boxed[.5]{Q}$ represents a quantum algorithm that performs,
\begin{eqnarray}\label{Q_loop}
    \ket{\bm{z}'_t}=\ket{\bm{z}_{t-1}}-\xi \mathcal{D} \ket{\bm{z}_{t-1}}.
\end{eqnarray}  
Here, $\xi$ is learning rate, $\mathcal{D}$ is the effective gradient operator, and $\mathcal{D}\ket{\bm{z}} \propto \ket{\nabla f(\bm{z})}$, where $\nabla f(\bm{z})$ represents the steepest descent direction for $f(\bm{z})$.
Notably, $\ket{\bm{z}_{t-1}}$, different from $\ket{\bm{z'}_{t-1}}$, can be represented by a PQC, noted as $\ket{\bm{z}_{t-1}}=U(\bm{\theta_{\bm{z}_{t-1}}})\ket{0}$. 
On the LHS of Eq.~\eqref{Q_loop}, $\ket{\bm{z}'_t}$ does not have an explicit PQC representation.
Thus a classical learning part $\boxed[.5]{C}$ is required to obtain the PQC representation, $U(\bm{\theta_{\bm{z}_{t}}})$, via 
\begin{eqnarray}\label{C_loop}
    \text{min}_{\bm{\theta}} 1-\| \bra{\bm{z}'_t} [U(\bm{\theta})]_t \ket{0} \|^2.
\end{eqnarray}
Here we refer to $[U(\bm{\theta})]_t$ as a PQC at $t$-th iteration to be learned with adjustable architecture and parameters, and the resultant state $\ket{\bm{z}_t}=U(\bm{\theta_{\bm{z}_{t}}})\ket{0}$ approximates $\ket{\bm{z}_t'}$. In this work, we will demonstrate a RL protocol working for it, which is shown in Figure \ref{fig1}(b).

We summarize the flow of NOM below.
Step 1: an initial $U(\bm{\theta_{\bm{z}_{0}}})$, encoding the initial value  $\bm{z}_0$, is input into the algorithm. 
Step 2: using pre-set criteria, we evaluate the cost function to determine whether to proceed with further optimization steps (this decision point is represented by the diamond in Figure~\ref{fig1}(a)).
Step 3: if further optimization is required, we apply a quantum gradient algorithm to generate a new quantum state  $\ket{\bm{z}'_1}$ as shown in Eq.~\eqref{Q_loop}.
Step 4: $\ket{\bm{z}'_1}$ is approximated through a classical learning part as $\ket{\bm{z}_1}$, with $U(\bm{\theta_{\bm{z}_1}})$ PQC representation.
Step 5: $U(\bm{\theta_{\bm{z}_1}})$ is fed back as the input to the subsequent iteration, where the procedure goes to Step 2. In Appendix \ref{supp:pseudo}, we list the corresponding pseudo-code~\cite{supp}.

\textit{Quantum Algorithm ---}
If $\bm{z}$ were real-valued vectors, Eq.~\eqref{Q_loop} could be implemented using previous approaches~\cite{rebentrost2019quantum, cheng2024polynomial, li2020quantum}. However, the scenario where $\bm{z}$ resides in the complex domain, which remains unexplored. In this work, we extend previous results to address the complex case.

\begin{res} \label{res1}
Given $f(\bm{z})$ defined as a polynomial, mapping from $\mathbb{C}^{d+1} \rightarrow \mathbb{R}$, the effective gradient operator at $\bm{z}$ can be expressed as
\begin{equation}\label{eq:D}
\mathcal{D}(\bm{z}) = \operatorname{Tr}_{p-1}\left[\mathbb{I} \otimes \rho_{\bm{z}}^{\otimes p-1} \mathcal{M_D}\right],
\end{equation}
where $\rho_{\bm{z}} = \ket{\bm{z}}\bra{\bm{z}}$, and $\mathcal{M_D} = \sum_{k=1}^p \mathcal{P}_k \mathcal{F} \mathcal{P}_k$, with $\mathcal{P}_k$ denoting the permutation of the first and $k$-th subsystems in the $p$-fold tensor product $\ket{\bm{z}}^{\otimes p}$. 
\end{res}
\begin{proof}
    They are provided in Appendix \ref{supp:res1}~\cite{supp}.
\end{proof}
This result is formally equivalent to previous approaches. The distinction lies in our extension of the feasible region from the real domain to the complex domain. This is made possible by the pre-condition of a real-valued cost function and the adoption of a uniform learning rate for both the real and imaginary components. 

As a consequence, techniques based on linear combinations of unitaries (LCU) or Hamiltonian simulation, which implement Eq.~\eqref{Q_loop} in the real-valued domain, can now be extended to the complex domain while maintaining the efficiency of algorithms. 
% The key lies in realizing a state-dependent Hamiltonian simulation. 
For instance, in case where $\mathcal{F} = \sum_{\alpha=1}^k \otimes_{j=1}^p a_{\alpha, j} F_{\alpha,j}$, with $F_{\alpha,j}$ as unitary matrices and $a_{\alpha, j}$ as weights, and where $k$ is bounded, the algorithm—via the LCU framework—achieves a gate complexity of $\mathcal{O}(poly\log(d))$~\cite{cheng2024polynomial}. For more general cases, the Hamiltonian simulation approach can be used, only requiring an efficient method for extracting matrix elements. For example, the original Hamiltonian simulation protocol requires two oracles to query the coefficient matrix~\cite{low2017optimal}. More detailed implementation for realizing Eq.~\eqref{Q_loop} is provided in the Appendix \ref{supp:QGA}~\cite{supp}.

\textit{Classical Learning ---} 
To approximate $\ket{\bm{z}'}$ with $[U(\bm{\theta})]_t$, we need to solve Eq.~\eqref{C_loop}. $\boxed[.5]{C}$ works for it and is structured as two steps.

The first step utilizes a fixed PQC architecture, such that $[U(\bm{\theta})]_t = U(\bm{\theta_t})$. Analogous to standard variant quantum algorithms, $\bm{\theta}_t$ are optimized by minimizing the following expression:
\begin{equation} \label{eq:ind1}
    c_1 = 1 - \|\bra{\bm{z}'_t} U(\bm{\theta}_t)\ket{0}\|^2,
\end{equation}
where $\bm{\theta}_{\bm{z}_{t-1}}$ serves as the initial input parameter. The training strategy is regulated using a predefined threshold $\epsilon_0$. If the optimization of Eq.~\eqref{eq:ind1} yields a value below $\epsilon_0$, the procedure is halted, and the current parameters are accepted as $\bm{\theta}_{\bm{z}_t}$, thereby defining the PQC representation of $\ket{\bm{z}_t}$. If the threshold is not met, the process is directed to the second step.

In the second step, a shallow PQC, $T(\bm{\alpha})$, is appended to the existing $U(\bm{\theta}_t)$, followed by the minimization of Eq.~\eqref{eq:ind1}. $T(\bm{\alpha})$ is either designed based on the characteristics of $\mathcal{D}$ or generated through an architecture search algorithm. This modification enhances the expressibility of the circuit and increases the likelihood of minimizing 
\begin{equation}  \label{eq:ind2}        
   c_2 = 1-\|\bra{\bm{z}_{t}'} T(\bm{\alpha}) [U(\bm{\theta}_t)]\ket{0}\|^2 
\end{equation}
to exceed $\epsilon_0$. Upon it, we learn the added layer.
Once $c_2 \le  \epsilon_0$, we update $U(\bm{\theta_{\bm{z}_t}})$ with $T(\bm{\alpha})U(\bm{\theta}_{t-1})$ and go to the next iteration.
In Section \ref{simulation}, we demonstrate a RL-based protocol for this circuit learning, with Figure \ref{fig1}(b) showing the key steps.

In the NOM, the issue of gradient vanishing depicted in Section \ref{vanishing} is naturally circumvented. The expression $1 - \| \bra{\bm{z}_t'} U(\bm{\theta}_{\bm{z}_{t-1}}) \ket{0} \|^2$ can act as an indicator to distinguish between Eq.~\eqref{condition_1} and Eq.~\eqref{condition_2}, and is evaluated at the start of each iteration. Even when the gradient with respect to $\bm{\theta}$ vanishes—indicating that traditional optimization methods cannot further improve performance—this indicator allows us to detect whether the actual gradient is zero. Based on this evaluation, a $T(\bm{\alpha})$ term is appended and trained at a controllable computational cost as described earlier. Importantly, this indicator can be efficiently estimated, as it corresponds to the projective measurement of $U^\dagger(\bm{\theta})\ket{\bm{z}'_t}$ onto $\ket{0}$.
In summary, the protocol provides a clear criterion to determine whether a local minimum has been reached, guiding the evolution of the PQC even when $\|\nabla_{\bm{\theta}} f\| \approx 0$ but $\|\nabla_{\bm{z}} f\| \neq 0$.

\subsection{Numerical Results} 
\label{simulation}
To validate the effectiveness of NOM, we conducted simulations on two scenarios: (i) Max-cut problem on graphs with bounded degrees and (ii) the optimization of polynomial functions.

\begin{figure}[!ht]
   \centering  \includegraphics[width=1\columnwidth]{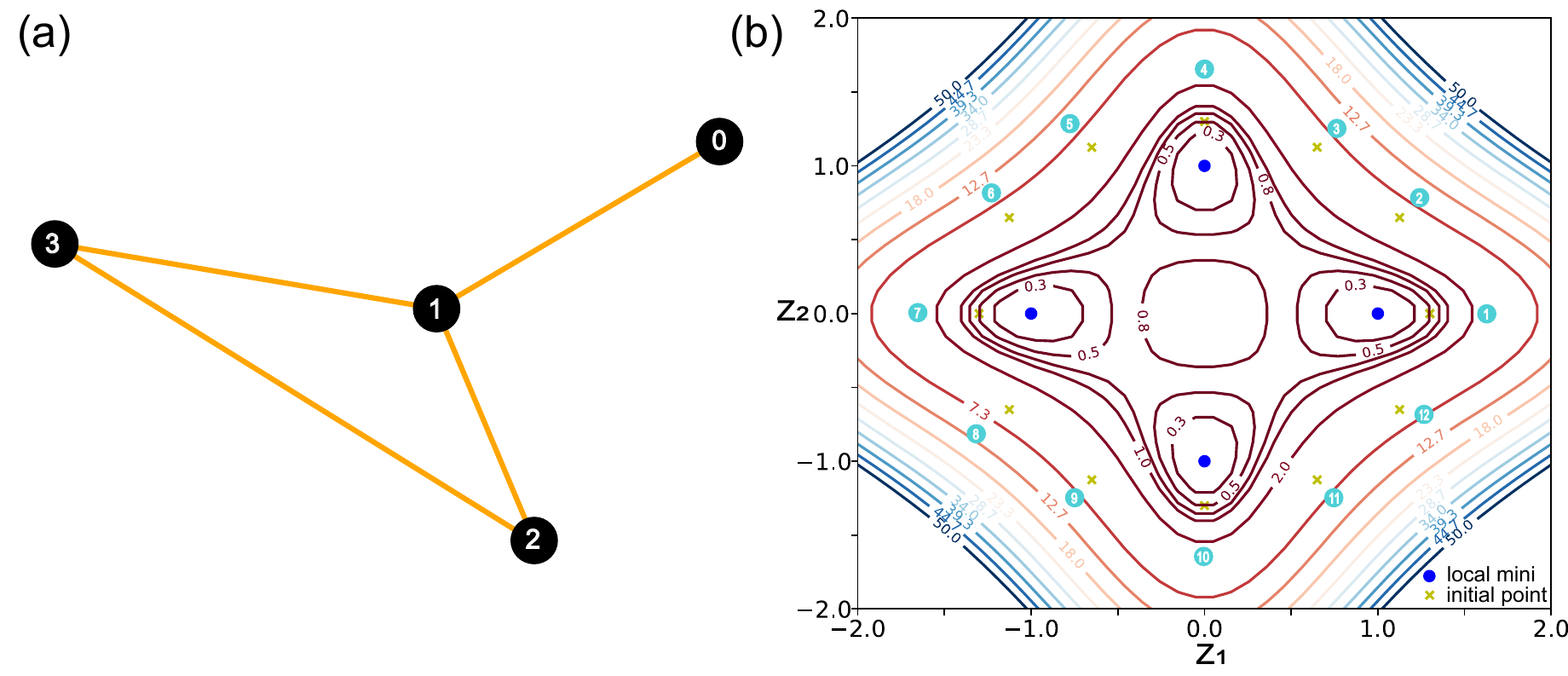}
   \caption{Illustration problems for numerical simulation. (a) is the graph for the MaxCut problem with bounded degrees, which includes 4 nodes and 4 edges. (b) is the contour plot of a polynomial within $z_3=0$ plane, where 4 local minimum are labeled as blue circles and 12 initial points are labeled as yellow crosses. } 
\label{num_problem} 
\end{figure}

The first scenario focus on the MaxCut problem with setting of a four-vertex graph with randomly generated edges, as shown in Figure~\ref{num_problem}(a). The objective function we aim to maximize is
\[
H_c = \sum_{\langle i,j \rangle} \frac{1 - \sigma_{z_i} \sigma_{z_j}}{2},
\]
where \(\sigma_{z_i}\) and \(\sigma_{z_j}\) represent the Pauli-Z operators acting on qubits \(i\) and \(j\), $\langle i,j \rangle$ exists only there is an edge. According to Edward Farhi~\cite{farhi2014quantum}, gradient-based algorithms can be used to learn a quantum state that approximates the solution to a given problem, with increased circuit depth leading to a better approximation ratio.
In our simulations, we first select 4 typical circuits from QAOA or VQA algorithms and fix the depth at 1, as specified in Appendix~\ref{supp:simu}. We then obtain the initial settings for our method by optimizing the parameters within these circuits until no further improvement can be obtained.
Figure~\ref{num_problem2}(a) presents the results, showing the cost function values as a function of the iteration number. Four cases, corresponding to four selected circuits, are labeled and tested.  Recalling that there are alternated $\boxed[.5]{Q}$ and $\boxed[.5]{C}$ parts in each iteration in our method, where $\boxed[.5]{Q}$ conduct the quantum gradient descent and $\boxed[.5]{C}$ conduct the PQC approximation of quantum state. The four solid `ideal' lines record the cost function value in each iteration when $\boxed[.5]{C}$ reconstruct states $\ket{\bm{z}'_t}$ ideally (direct record output without realizing $\boxed[.5]{C}$). All the data points in the four dashed `with PQC approximation' lines record the cost function value with optimal parameters in the $\boxed[.5]{C}$ part in each iteration(i.e.,  the gradients to the parameters approach zero). Notice that near-flat and jump are in four dashed lines. This occurs that even the indicator $c_1$ is not below the criteria, we accumulate difference to establish a better layered circuit. Every jump means we add a layer and process RL-based method. Notice that the near-flat regions and jumps in the four dashed lines, which corresponds to the addition of a new layer or not, which is followed by the application of the RL-based method. 
Notice the  jumps and near-flat regions in the four dashed lines, which correspond to whether a new layer is added or not, followed by the application of the RL-based method.
% there are two steps in the $\boxed[.5]{C}$ part, one optimize $c_1$ in Eq.~\eqref{eq:ind1} and the second optimize $c_2$ in Eq.~\eqref{eq:ind2}. The  denote the action of these two steps, respectively. 
% (\uwave{Figure~\ref{num_problem2}(a) may be updated to show the value of `indicator'.} 
We can see that whenever the `indicator' is non-zero, the cost function value will not reach its real maximum in each iteration, although the $\boxed[.5]{C}$ part successfully reaches the optimal parameters. This highlights the limitation of conventional methods with a fixed ansatz, which may fail to reach the optimal solution if the circuit is not properly configured. With the help of the $\boxed[.5]{Q}$ part, each case successfully converges to the target value, demonstrating our method’s capability in mitigating gradient vanishing issues and reaching the optimal cost function.

The second scenario involves the optimization of polynomial. We denote this cost function as 
\[
f(Z) = Z^{\otimes 2} A Z^{\dagger \otimes 2},
\]
where \(Z = (1, z_1, z_2, z_3)\) and \(A=\sigma_x^{\otimes 4}+\sigma_y^{\otimes 4}+\sigma_z^{\otimes 4}\) is a coefficient matrix derived from the polynomial expression~\cite{supp}. 
It is observed that the cost function features isolated local minima within the region \((z_1,z_2,z_3) \in [-2, 2]^{\otimes 3}\), and our simulation focuses on optimizing within this region. In Figure~\ref{num_problem}(b), we present a case where \(z_3 = 0\). We start with 12 initial points, each positioned on a circle with a radius of 1.3. A parameterized circuit is then used to approximate these points via a state-to-state method, serving as the initial settings for our method.
The simulation results for the cost function values are shown in Figure~\ref{num_problem2}(b). Twelve cases, corresponding to twelve initial points, are labeled and tested. For all the initial values, the cost function successfully converges to the target value. In this scenario, feasibility is ensured by using a larger set of initial seeds, each of which is driven to its respective local minimum. The reason for the deviations remains the same as in the first scenario.

\begin{figure}[!ht]
   \centering  \includegraphics[width=1\columnwidth]{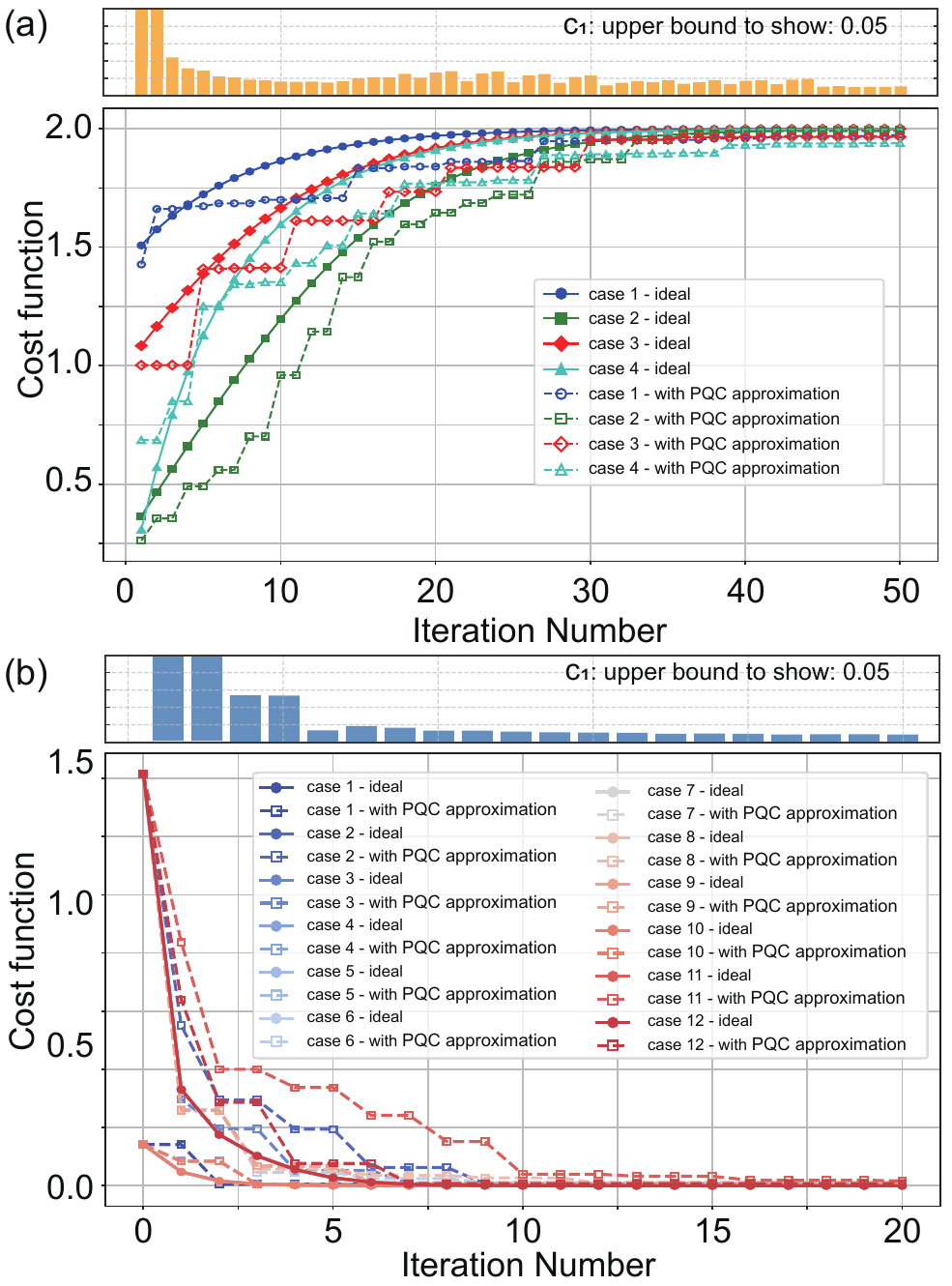}
   \caption{Numerical results of the quantum-classical learning protocol for (a) Maxcut problem and (b) polynomial optimization. Solid markers and lines denotes the ideal case. Dashed markers and lines denotes the cases where the classical loop is approximated by PQC generated by our RL method with certain accuracy limit (fidelty=0.998).} 
\label{num_problem2} 
\end{figure}

We outline our numerical procedure here. 
The simulation is carried out according to the described protocol. We set learning rate to be $\xi = 0.2$ for both examples. For each iteration, a quantum gradient algorithm is conducted with a cost function defined by either a Hamiltonian or polynomial coefficients. For the classical part of learning algorithm, we introduce a RL algorithm for learning $T(\bm{\alpha})$. The RL algorithm implement a PPO algorithm that learns a policy to approximate the output state from quantum gradient step. After the policy is learned from multiple simulated episodes, the quantum circuit is constructed step by step using a pre-selected universal gate set. The output state is represented by the learned PQC that acts on the input state.  The details of the algorithm is described in the Appendix \ref{supp:RL}~\cite{supp}.

Remarkably, the PQC representation is an approximation of the output state of the quantum gradient algorithm. To investigate how  the algorithm is affected by the inaccuracy of states,we additionally run the algorithm with artificially added noise. 
Shown in Appendix \ref{supp:simu}~\cite{supp}, we examine the influence of disturbance magnitude on each iteration.
The results indicate that the protocol remains robust under a certain level of disturbance, but if the approximation quality deteriorates significantly, convergence is disrupted. This robustness can be attributed to the inherent stability of the gradient, as discussed in Appendix~\ref{supp:error}~\cite{supp}. However, the influence of disturbance on the final convergence, and how to effectively measure it, remains an open question that could be explored in future work.

\section{Feasibility}

\textit{Managing the learning cost for added layer --- }
If the first step of classical learning proves not working, our method will shift to minimize $1-\|\bra{\bm{z}_{t}'} T(\bm{\alpha}) \ket{\bm{z}_{t-1}}\|^2$, where $\ket{\bm{z}'_t} = \ket{\bm{z}_{t-1}} - \xi \mathcal{D} \ket{\bm{z}_{t-1}}$. As $\|\ket{\bm{z}_t}-\ket{\bm{z}'_{t}} \| \approx \xi \|\mathcal{D}\|$, this searching region is bounded by $\xi$. Given that $\ket{\bm{z}'_{t}}$ is close to $\ket{\bm{z}_{t-1}}$, the initial configuration of $T(\bm{\alpha})$, consisting of alternating layers of single- and two-qubit gates, can be set near the identity. 

Notably, when $\xi \|\mathcal{D}(\bm{z})\|$ is small, the circuit can remain shallow, with a depth of $\mathcal{O}(\xi^2 \|\mathcal{D}(\bm{z})\|^2)$~\cite{campbell2019random}. Consequently, we can limit the depth of $T(\bm{\alpha})$ and continue the architecture search using manageable classical resources.

\textit{Barren plateaus in classical learning part ---}
Next we analyze barren plateaus in learning $T(\bm{\alpha})$.
We denote that $T(\bm{\alpha}) = \prod_{j=1}^{L} T_j(\alpha_j)$, where $L$ is finite. The partial derivative to the $k$-th parameter in Eq.~\eqref{eq:ind2} is 
\begin{equation}\label{bp_gradient1}
    \frac{\partial c_2}{\partial \alpha_k } = i \bra{\bm{z}}T^{\dagger}_{-} \left[ V_k, ~ T^{\dagger}_{+} \ket{\bm{z}'}\bra{\bm{z}'} T_{+} \right] T_{-} \ket{\bm{z}},
\end{equation}
where $V_k$ denotes the derivative of a single-layer circuit $T_k(\alpha_k)$, $[\cdot]$ represents the commutator, and $T_{-} = \prod_{j=1}^{k} T_j(\alpha_j), T_{+} = \prod_{j=k+1}^{L} T_j(\alpha_j)$.
Initially, $T(\bm{\alpha}) = T_{+}T_{-} = I$, which simplifies Eq.~\eqref{bp_gradient1}  to
\begin{equation} \label{bp_gradient2}
    i\bra{\bm{z}} \left[\ket{\bm{z}'}\bra{\bm{z}'}, ~ V_k \right]\ket{\bm{z}},
\end{equation}
with $T_{+} = T_{-}^{\dagger}$ initialized as the identity. Eq.~\eqref{bp_gradient2} equals zero only when the target state commutes with the observable, which is generally not the case~\cite{grant2019initialization}. When $T_{+}$ and $T_{-}$ are not initialized from identity, the finite search region centered around the identity ensures that the major part, i.e., Eq.~\eqref{bp_gradient2} is preserved, thereby mitigating barren plateaus.

During the optimization process, when $\xi$ is constrained, the depth of $T(\bm{\alpha})$ remains bounded, and the search region remains near the identity operation. Consequently, the partial derivative predominantly behaves as described in Eq.~\eqref{bp_gradient2}. Further details are provided in the Appendix~\ref{supp:gradient_vanishing}~\cite{supp}.

\textit{Computational cost of the NOM ---}
Notations for relevant variables are consistent with Section~\ref{theory}. 

First, we consider the sample complexity of each iteration.
% Additionally, let $s_c$ represent the sample complexity for the classical learning component.
Based on prior work~\cite{rebentrost2019quantum, li2020quantum, cheng2024polynomial}, the sample complexity for the quantum algorithm in each iteration is given by $\mathcal{O}\left( kp \log(kp)/ \epsilon_L^2+ (1+\xi)^{2}/\delta_L^2 \right))$, where $\epsilon_L$ and $\delta_L$ are bounded errors for statistical estimations using the LCU method. Alternatively, it can be expressed as $\mathcal{O}\left(p^3||\mathcal{F}||^2_{max}(1+\xi)^{2}/\epsilon_H \right)$, where $\epsilon_H$ is the bounded error for the Hamiltonian simulation method. 
For the classical learning part, it requires repeatedly querying the state output by quantum gradient algorithm. 
As only the probability onto $\ket{0}$ is required to known, $\mathcal{O}(1/\epsilon_c^2)$ copies of states is required, where $\epsilon_c$ is produced for probability property of measurement.

Therefore, the total sample complexity for one iteration of the NOM is,
\begin{eqnarray}\label{sample_com}
    \mathcal{O} \left(\frac{1}{\epsilon_c^2}\cdot poly(p,\xi, C_{\mathcal{F}}, \frac{1}{\epsilon} )  \right),
\end{eqnarray}
where $C_{\mathcal{F}}$, $\epsilon$ are rough parameters, which relate to $\mathcal{F}$ and induced error.

Second, the time required for each iteration is divided into two parts: the time to run the quantum algorithm and the time for classical training. From~\cite{rebentrost2019quantum, li2020quantum, cheng2024polynomial}, the time complexity to process the quantum gradient is,
\begin{eqnarray}\label{sample_cost}
    \mathcal{O}\left(poly(p,\log(d),\xi,C_{\mathcal{F}}, \frac{1}{\epsilon})\right).
\end{eqnarray}
For the classical learning part, it is split into quantum and classical parts. The cost function is evaluated using a quantum device, whose time consumption is linear with the depth of circuit multiplied by sample cost in Eq.~\eqref{sample_cost}, while other operations are executed on a classical processor. Since the depth of the learning circuit is bounded by $\xi \|\mathcal{D}(\bm{z})\|$ (shown in Method) and a limited quantum gate set, we hypothesize that the complexity of designing quantum circuits can be constrained. Therefore, the time to query the quantum circuit is linear with respect to the circuit depth and the number of learning episodes.
% The total time complexity is determined by multiplying the sample cost by the circuit depth of the entire program.

Finally, each input sample is encoded with a PQC. Increasing iterations in the NOM would increase the cost by multiplying it with number of iterations.

\section{Discussion}
How can we generate a PQC for a given target? Unlike previous approaches that construct and optimize a classically accessible parameter space, the approach presented in this paper explores the Hilbert space using quantum gradients. This is achieved through the NOM that integrates both quantum and classical algorithms.
Our analysis shows that this protocol effectively mitigates certain instances of gradient vanishing and performs well in achieving target objectives. Meanwhile, the classical learning algorithm within the protocol is free from barren plateaus, a direct result of the small search region and the finite depth of the training circuit. This also leads to another significant advantage—affordable computational costs. In addition, the protocol presented in this paper offers contributions to other fields as well.

\textit{Impact on Quantum Gradient Algorithms --- }
From an algorithmic perspective, the proposed protocol enhances the quantum gradient algorithm for polynomial optimization. The challenge of optimizing a polynomial traces back to Hilbert's seventeenth problem, which asks whether a multivariate polynomial can be expressed as a sum of squares of rational functions. Current quantum algorithms typically rely on cascade circuits, each with multiple inputs and a single output. These algorithms face two significant challenges: first, the number of iterations required is often unknown, making it difficult to predict the necessary sample size; second, the exponential growth in sample consumption with growing iterations~\cite{2019-Rebentrost-Qgradient, cheng2024polynomial}. 
The NOM, which utilizes PQCs to approximate temporal quantum states, addresses these challenges while preserving the benefits of earlier frameworks, such as a time complexity of $\mathcal{O}(poly(\log (d)))$ per iteration. Further details, particularly from the perspective of the quantum gradient algorithm in optimization, are provided in the Appendix~\cite{supp}.

\textit{Extending the protocol ---} 
We consider the case of analytic $f(\bm{z})$ on domain $\sigma$, denoted as $f\in \mathbb{H}(\sigma): \mathbb{C}^{d+1} \rightarrow \mathbb{C}$. 
According to Cauchy–Riemann conditions, for any $\bm{z}\in \sigma$, the total derivative is given as $df =  \frac{\partial f}{\partial z} dz$. Formally, we can express the gradient descent iteration as, 
\begin{eqnarray}\label{eq:iter2}
    z' = z - \frac{\partial f}{\partial z} \Delta z.
\end{eqnarray}
We observe that $\mathcal{D}(\bm{z}) \ket{\bm{z}^{*}}$ produces the same effect as ${\partial f}/{\partial z}$ and the emphasis is to prepare $\ket{z^{*}}$. Although realizing arbitrary $\ket{z^{*}}$ is challenging, the state has its PQC approximation. By simply modifying the rotations along the x- and z-axes to their inversions, we can obtain the complex form. Upon it, this iterative process can be implemented in a analytic complex domain.

In summary, the learning of PQCs has long faced significant challenges due to trainability.  Our proposed protocol leverage quantum gradient, advancing both quantum optimization techniques and the learning of parameterized quantum circuits. Given the broad applications of PQCs in both near-term and future fault-tolerant quantum devices, this method is expected to offer unique advantages in terms of trainability and performance, offering significant benefits for quantum circuit synthesis and quantum machine learning problems.

\bibliographystyle{unsrt}
\bibliography{qG}

% \section{Overall Training Cost}

\begin{widetext}

\newpage
\counterwithin{figure}{section}
\appendix

\section{Details on Gradient Vanishing}
\label{supp:gradient_vanishing}
\textit{Proof for Result 1 ---}  
To get Result 1, we first introduce Lemma \ref{prob_orthon}, which give a probability bound that two random vector have overlap. 
\begin{lemma}\label{prob_orthon}
    Suppose $\bm{u}$, $\bm{v}$ be two $d$-dimension random vector whose elements are chosen from $\mathcal{N}(0,1/d)$, for some $\epsilon \in (0,1)$ there is 
    \begin{eqnarray}
        \text{Pr}(||\langle \bm{u}|\bm{v}\rangle||\geq \epsilon ) \leq 4 e^{-\frac{d\epsilon^2}{8}}
    \end{eqnarray}
    where $\langle \cdot | \cdot \rangle$ denotes the inner product and $||\cdot ||$ denotes the absolute value.
\end{lemma}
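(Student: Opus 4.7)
The plan is to reduce the inner product to a difference of two independent (scaled) chi-squared variables, and then invoke standard chi-squared concentration on each of them separately, union-bounding at the end.

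First, I would exploit the polarization identity coordinatewise:
\begin{equation*}
u_i v_i \;=\; \tfrac{1}{4}\bigl((u_i+v_i)^2 - (u_i-v_i)^2\bigr).
\end{equation*}
Setting $x_i = (u_i+v_i)/\sqrt{2}$ and $y_i = (u_i-v_i)/\sqrt{2}$, the orthogonality of this linear map together with the rotational invariance of the i.i.d.\ Gaussian law guarantees that $\{x_i\}_{i=1}^d$ and $\{y_i\}_{i=1}^d$ are mutually independent families, each marginally $\mathcal{N}(0,1/d)$. Summing the polarization identity over $i$ yields the clean decomposition
\begin{equation*}
\langle \bm{u},\bm{v}\rangle \;=\; \tfrac{1}{2}\bigl(\|\bm{x}\|^2 - \|\bm{y}\|^2\bigr),
\end{equation*}
so $d\|\bm{x}\|^2$ and $d\|\bm{y}\|^2$ are independent $\chi^2_d$ variables with mean $d$ and the event $\{|\langle \bm{u},\bm{v}\rangle|\ge \epsilon\}$ is contained in $\{|\|\bm{x}\|^2-1|\ge \epsilon\}\cup\{|\|\bm{y}\|^2-1|\ge \epsilon\}$.

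Next, I would apply the standard Laurent--Massart two-sided tail bound for chi-squared variables: for $W\sim \chi^2_d$ and $t>0$,
\begin{equation*}
\Pr\!\left(|W-d|\ge 2\sqrt{d t}+2t\right)\le 2e^{-t}.
\end{equation*}
Specialising to $W=d\|\bm{x}\|^2$ and choosing $t$ so that $2\sqrt{t/d}+2t/d = \epsilon$ (for $\epsilon\in(0,1)$ one can take the weaker but cleaner $t = d\epsilon^{2}/8$), one obtains
\begin{equation*}
\Pr\!\left(\bigl|\|\bm{x}\|^2 - 1\bigr|\ge \epsilon\right)\le 2e^{-d\epsilon^{2}/8},
\end{equation*}
and identically for $\|\bm{y}\|^2$. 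A union bound over the two events then yields
\begin{equation*}
\Pr\!\bigl(|\langle \bm{u},\bm{v}\rangle|\ge \epsilon\bigr)\le 4e^{-d\epsilon^{2}/8},
\end{equation*}
which is the claim.

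The only delicate step is matching the constants. Naively the Laurent--Massart inequality gives the bound only asymptotically in $\epsilon$, because the deviation $2\sqrt{t/d}+2t/d$ is not linear in $\sqrt{t/d}$. I expect the main obstacle to be justifying that, on the range $\epsilon\in(0,1)$ claimed in the statement, the choice $t=d\epsilon^{2}/8$ indeed satisfies $2\sqrt{t/d}+2t/d\le \epsilon$, so that the exponent $e^{-d\epsilon^{2}/8}$ can be pulled through cleanly; if this fails at the very edge $\epsilon\to 1$ one simply absorbs the gap into the prefactor $4$ (or replaces $8$ by a slightly larger constant), which is consistent with how the lemma is used in Result~\ref{res2}.
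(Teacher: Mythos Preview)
Your argument is correct and follows the same polarization-to-chi-squared route as the paper: the paper likewise sets $\bm{w}=(\bm{u}+\bm{v})/\sqrt 2$, bounds $\Pr(|\|\bm{w}\|^2-1|\ge\epsilon)$ via a direct Chernoff/MGF computation (obtaining $e^{d(\log(1+\epsilon)-\epsilon)/2}\le e^{-d\epsilon^2/8}$) rather than citing Laurent--Massart, and then union-bounds. Your concern about the constant near $\epsilon=1$ is unnecessary: with $t=d\epsilon^2/8$ one has $2\sqrt{t/d}+2t/d=\epsilon/\sqrt 2+\epsilon^2/4\le\epsilon$ for all $\epsilon\le 4(1-1/\sqrt 2)\approx 1.17$, so the exponent goes through on the entire stated range without any adjustment.
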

\begin{proof}
     Since $\bm{u}$ and $\bm{v}$ are both chosen randomly from $\mathcal{N}(0,1/d)$, so $\bm{w}=[w_1,\cdots, w_d]=\frac{\bm{u}+\bm{v}}{\sqrt{2}}\sim \mathcal{N}(0,1/d)$,too. 
     
     We have the probability 
     \begin{eqnarray}\label{prob_proof}
         &&\text{Pr}(||\bm{w}||^2-1\geq \epsilon)=\text{Pr}(e^{\lambda(||\bm{w}||^2-1)}\geq e^{\lambda\epsilon}) \nonumber \\
         &&\leq \mathop{min}_{\lambda >0}   e^{-\lambda \epsilon} \mathbb{E}[e^{\lambda (||\bm{w}||^2-1)}] \nonumber \\
         &&=\mathop{min}_{\lambda >0}   e^{-\lambda (\epsilon+1)} \mathbb{E}[e^{\lambda (||\bm{w}||^2)}] \nonumber \\
         &&=\mathop{min}_{\lambda >0}   e^{-\lambda (\epsilon+1)} \prod_i \mathbb{E}[e^{\lambda (w_i^2)}] \nonumber \\
         &&=\mathop{min}_{\lambda >0}   e^{-\lambda (\epsilon+1)} (\frac{d}{d-2\lambda})^{d/2} \nonumber \\
         &&\leq e^{d(\log(1+\epsilon)-\epsilon)/2} \leq e^{-d \epsilon^2/8}
     \end{eqnarray}
     for any positive $\lambda$ and $\mathbb{E}[\cdot]$ be the expectation. The second line in above equation uses the fact that 
     \begin{eqnarray}
         \mathbb{E}[x]=\int_0^{\infty} x p(x) \geq \int_a^{\infty} a p(x)=aP(x\geq a).
     \end{eqnarray}
     Similarly to  Eq.~\eqref{prob_proof}, we have $\text{Pr}(1-||\bm{w}||^2\geq \epsilon)\leq e^{-d \epsilon^2/8} $, too. And therefore,  \begin{eqnarray}
         \text{Pr}(\langle u|v\rangle \geq \epsilon)&&\leq \text{Pr}(||\bm{w}||^2-1 \geq \epsilon)+\text{Pr}(1-||\bm{w}||^2\geq \epsilon) \nonumber \\
         &&\leq 2 e^{-d \epsilon^2/8},
     \end{eqnarray} 
      \begin{eqnarray}
         \text{Pr}(-\langle u|v\rangle \leq 2 e^{-d \epsilon^2/8},
     \end{eqnarray} 
     too. So $\text{Pr}(||\langle \bm{u}|\bm{v}\rangle || \geq \epsilon) \leq 4 e^{-d \epsilon^2/8} $ is proved.
\end{proof}

Then we start the proof of Result 2.
\begin{proof}

In our case, we focus on 
\begin{equation}\label{derivatives2}
        \nabla_{\bm{\theta}} f = \frac{\partial \bm{z}}{\partial \bm{\theta}} \cdot \nabla_{\bm{z}} f, \quad \text{where}\quad \frac{\partial{\bm{z}}}{\partial \bm{\theta}} = \begin{pmatrix}
             \frac{\partial z_1}{\partial \theta_1} & \dots & \frac{\partial z_d}{\partial \theta_1} \\
             \vdots & \ddots & \vdots \\
             \frac{\partial z_1}{\partial \theta_m} & \dots & \frac{\partial z_d}{\partial \theta_m} 
    \end{pmatrix},
\end{equation}
which is introduced in manuscript. We want to study the how large the norm of $\nabla_{\bm{\theta}} f$ can be.

Statistically, the gradient $\bm{\nabla_z}f$ can be viewed as a vector randomly chosen from $d$ dimension complex space. We can also denote the $i$-th row of the matrix $\partial \bm{z}/\partial \bm{\theta}$ as a $d$ dimension vector $v_i$, too. 

First, let us observe $\langle v_i | \bm{\nabla_z}f \rangle$. 
Here we represent the complex variables $\bm{v}_i$ and $\bm{\nabla_z}f$ in real and imaginary parts as $\bm{v}_i=\partial \bm{z}/\partial \theta_i =\alpha_i+i \beta_i$, and $\bm{\nabla_z}f =\gamma + i\delta$, where $\alpha,\beta,\gamma,\delta$ can be treated as independent $d$-dimension  random real vectors. 

Given assumptions that the norm of $\bm{\nabla_z}f$ and $v_i$ are bounded.
We have that $||\alpha||,||\beta||,||\gamma||,||\delta||$ is nearly independent to the Hilbert space dimension $d$. If their elements are  sampled from independent identically distribution $\mathcal{N}(0,K)$,  the independence of $||\alpha||=\sqrt{\sum_{i=1}^d \alpha_i^2}$ to $d$ indicates that $\alpha_i \propto 1/\sqrt{d}$ and therefore $K\sim C/d$ for some constant $C$. That is $\alpha,\beta,\gamma,\delta$'s elements are chosen randomly from $\mathcal{N}(0, C/d)$ with $C$ being some bounded positive value so that the norm is bounded. Therefore $\alpha/\sqrt{C},\beta/\sqrt{C},\gamma/\sqrt{C},\delta/\sqrt{C}$ are random vectors whose elements are sampling from $\mathcal{N}(0,1/d)$. Then we have 
\begin{eqnarray}
    \langle \bm{v}_i|\bm{\nabla}f\rangle = \langle \alpha|\gamma\rangle -\langle \beta|\delta\rangle +i (\langle \alpha|\delta\rangle-\langle \beta|\gamma\rangle
\end{eqnarray}
and the probability 
\begin{eqnarray}\label{prob_comp_ortho}
    &&\text{Pr}(||\langle \bm{v}_i|\bm{\nabla}f\rangle||\geq \epsilon C) \nonumber \\
    &&\leq 1- \text{Pr}(||\langle \alpha|\gamma\rangle -\langle \beta|\delta \rangle||\leq  \frac{\epsilon}{\sqrt{2}} C ) \nonumber \\
    &&\times \text{Pr}(||\langle \alpha|\delta \rangle-\langle \beta|\gamma \rangle|| \leq \frac{\epsilon}{\sqrt{2}} C)\nonumber \\
    &&= 1- \text{Pr}(|\langle (\alpha, \beta)|(\gamma,-\delta)\rangle| \leq  \frac{\epsilon }{\sqrt{2}} C) \nonumber \\
    && \times  \text{Pr}(||\langle (\alpha,\beta)|(\delta,-\gamma)\rangle|| \leq \frac{\epsilon}{\sqrt{2}} C )  \nonumber \\
     &&\sim 1- \text{Pr}(||\langle (\frac{\alpha}{2\sqrt{C}}, \frac{\beta}{2\sqrt{C}})|(\frac{\gamma}{2\sqrt{C}},-\frac{\delta}{2\sqrt{C}})\rangle|| \leq  \frac{\epsilon}{4\sqrt{2}} ) \nonumber \\
    && \times  \text{Pr}(||\langle (\frac{\alpha}{2\sqrt{C}},\frac{\beta}{2\sqrt{C}})|(\frac{\delta}{2\sqrt{C}},-\frac{\gamma}{2\sqrt{C}})\rangle|| \leq \frac{\epsilon}{4\sqrt{2}}  )  \nonumber \\
    &&\leq 1-(1-4e^{-\frac{d \epsilon^2}{128}})(1-4e^{-\frac{d \epsilon^2}{128}}) \nonumber \\
    &&\leq 8 e^{-\frac{d\epsilon^2}{128}}
\end{eqnarray}
where we have the notation $(\alpha,\beta)$ means a $2d$-dimension vector whose first $d$-dimension half part is $\alpha$ and the last half is $\beta$, similarly the notation $(\gamma,-\delta)$ and $(\delta,-\gamma)$. In the last two lines in Eq.~\eqref{prob_comp_ortho} we use the fact that $||\alpha,\beta||=||\bm{v}_i||$ and $||\gamma,-\delta||=||\delta,-\gamma||=||\bm{\nabla_z}f||$, as well as the result of Lemma.\ref{prob_orthon}.
The result of Eq.~\eqref{prob_comp_ortho} means in high dimension $d$, the vector $\bm{v}_i$ and $\bm{\nabla_z}f$ will orthogonal to each other with probability near to 1.

we estimate the probability of the norm $||\bm{\nabla_\theta}f||=\sqrt{\sum_{i=1}^{poly(\log(d))}||\langle \frac{\partial{\bm{z}}}{\partial \bm{\theta}_i} | \bm{\nabla_z}f\rangle ||^2}$ be larger than $\epsilon poly(\log(d)) C$, as 
\begin{eqnarray}
    &&\text{Pr}(||\bm{\nabla_\theta}f||\geq  \epsilon  C \sqrt{poly(\log(d))}) \nonumber \\
    && \leq 1- \prod_{i=1}^{poly(\log(d))}(1-\text{Pr}(||\langle \bm{v}_i|\bm{\nabla_z}f\rangle ||\geq \epsilon  C)) \nonumber \\
    && = 1-(1-8e^{-\frac{d\epsilon^2}{128}})^{poly(\log(d))}\nonumber \\
    &&\leq 8poly(\log(d)) \cdot  e^{-\frac{d \epsilon^2}{128}}
\end{eqnarray}

Remarkably, we make a variable substitution to transfer $C$ and $\log(d)$ to the right side of formula.
\end{proof}

\section{the Pseudo-code of NOM}
\label{supp:pseudo}

In this section, the pseudo-code of NOM is presented in Table~\ref{algo_opt}, corresponding to the theory and Figure 1 of the manuscript.

\floatname{algorithm}{Table}
\begin{figure}[H] 
\begin{algorithm}[H]
   \caption{The NOM.}
   \label{algo_opt} 
   % \begin{flushleft}
   \hspace{\algorithmicindent} 
   \textbf{Input:} an initial ansatz $U(\bm{\theta}_{z_0})$ with its structure and therein parameters to be optimized, $m$, the maximum iteration number, $\epsilon$, the terminate threshold, and $\xi$, the learning rate. 
   % \end{flushleft}
   \begin{spacing}{1.0}
      \begin{algorithmic}[1]
      \State Set $\Delta f\leftarrow 1$
      \While{$t-1\in [m]\cap \Delta f\ge \epsilon$ }
      \Function{quantum\_algorithm\_part}{$U(\bm{\theta}_{z_{t-1}})$}
      \State \Return $\ket{\bm{z}_{t}'}\leftarrow U(\bm{\theta}_{z_{t-1}})\ket{0}-\xi \mathcal{D}U(\bm{\theta}_{z_{t-1}})\ket{0}$ \Comment{$\ket{\bm{z}_{t-1}}=U(\bm{\theta}_{z_{t-1}})\ket{0}$}
      \EndFunction
      \State Calculate $\Delta f \leftarrow 1 - \|\bra{\bm{z}'_t} U(\bm{\theta}_{z_{t-1}})\ket{0}\|^2$
      \Function{classic\_learning\_part}{$\ket{\bm{z}_{t}'}$, $U(\bm{\theta}_{z_{t-1}})$}
      \State \Return $U(\bm{\theta}_{z_{t}})$
      \EndFunction 
      \EndWhile
      \State \textbf{return:} $U(\bm{\theta}_{z_{t}})$
      \Algphase{Steps of sub-function of quantum loop (depicted in Hamiltonian Simulation method)}
      \Require The principal register is initialized with $U(\bm{\theta}_{z_{t-1}})$, and the three other ancillary systems, $\ket{0}_{lcu}\ket{0}_{d}\ket{0}_e$
      \State Driving the entire system through the circuit of quantum gradient algorithm in Figure~\ref{circuit_quantum_loop}.
      \State Post-select on $\ket{0}_{lcu}\ket{0}_{d}$ and the principal register is output, i.e., $\ket{\bm{z}'_{t}}$
      \Algphase{Steps of sub-function of classic loop}
      \Require $U(\bm{\theta}{z_{t-1}})$, and $\ket{\bm{z}'_{t}}$ is repeatedly produced
      \Function{1st~step}{$U(\bm{\theta}{z_{t-1}})$, $\ket{\bm{z}'_{t}}$}
      \State Transport $\ket{\bm{z}'_{t}}$ to the inverse circuit of $U(\bm{\theta}_{z_{t-1}})$
      \State Measure the possibility on $\ket{0}\bra{0}$
      \State \textbf{return:} $\bm{\theta}$, which is tuned to max the aforementioned possibility
      \EndFunction
      \State If the cost function satisfies the specified criteria, we terminate the process; otherwise, we proceed to the next function
      \Function{2nd~step}{$U(\bm{\theta}{z_{t-1}})$, $\ket{\bm{x}'}$} \Comment{This step can be realize with various method, we refer to reinforcement learning and depict it later}
      \State Transport $\ket{\bm{z}'_{t}}$ to a initialized a slice of circuit $T(\bm{\alpha})$ and the inverse circuit of $U(\bm{\theta}{z_{t-1}})$
      \State Measure the possibility on $\ket{0}\bra{0}$
      \State \textbf{return:} $T(\bm{\alpha})$, which is tuned to max the aforementioned possibility
      \EndFunction
      \State $U(\Tilde{\bm{\theta}})\leftarrow U(\bm{\theta}) ~\mbox{or} ~ T(\bm{\alpha}) U(\bm{\theta})$ is returned
      \end{algorithmic}
     \end{spacing}
\end{algorithm}
\end{figure}

% In addition, we provide the pseudo-code for a typical method to train PQCs in Table~\ref{pqc_opt}.
% \begin{figure}[H]
% \floatname{algorithm}{Table}
% \begin{algorithm}[H]
% \caption{typical method to train PQC}
% \label{pqc_opt} 
% \begin{algorithmic}[1.0]
% \State \textbf{Input:} Initial parameters $\theta^{(0)}$, learning rate $\eta$, convergence threshold $\epsilon$
% \State \textbf{Output:} Optimized parameters $\theta^*$
% \State Initialize parameters $\theta \gets \theta^{(0)}$ for a pre-set $U(\theta)$
% \State $\theta_{\text{prev}} \gets \theta$
% \Repeat
%     \State Apply PQC $U(\theta)$ to initial state $|\psi_{\text{initial}}\rangle$ to obtain output state
%     \State Evaluate cost function $C(\theta)$ based on the output state or operation
%     \State Compute gradients $\nabla_\theta C(\theta)$ using such as parameter-shift rule
%     \State Update parameters $\theta \gets \theta - \eta \nabla_\theta C(\theta)$
%     \State $\Delta C \gets C(\theta_{\text{prev}}) - C(\theta)$
%     \State $\theta_{\text{prev}} \gets \theta$
% \Until{$|\Delta C| < \epsilon$}
% \State \textbf{Return:} Optimized parameters $\theta^* \gets \theta$
% \end{algorithmic}
% \end{algorithm}
% \end{figure}

\section{Quantum gradient algorithm in complex domain}
\label{supp:res1}
First, we review Result 2.
Given that $f(\bm{z})$ is defined as a polynomial, mapping from $\mathbb{C}^{d+1} \rightarrow \mathbb{R}$, the effective gradient operator at $\bm{z}$ can be expressed as
\begin{equation}
\mathcal{D}(\bm{z}) = \operatorname{Tr}_{p-1}\left[\mathbb{I} \otimes \rho_{\bm{z}}^{\otimes p-1} \mathcal{M_D}\right],
\end{equation}
where $\rho_{\bm{z}} = \ket{\bm{z}}\bra{\bm{z}}$, and $\mathcal{M_D} = \sum_{k=1}^p \mathcal{P}_k \mathcal{F} \mathcal{P}_k$, with $\mathcal{P}_k$ denoting the permutation of the first and $k$-th subsystems in the $p$-fold tensor product $\ket{\bm{z}}^{\otimes p}$. 

\begin{proof}
As \( f(z, \bar{z}) \) is real valued at every moment, it can be regarded as a real function $f(x,y)$ with two real variables, 
\begin{eqnarray}
    z = x + iy, \quad \bar{z} = x - iy,
\end{eqnarray} 
where \( x, y \in \mathbb{R} \).
By definition of the gradient descent algorithm in the real domain, the iterative formula can be  
\begin{eqnarray}
    \left(x', y'\right) = \left(x - \frac{\partial f}{\partial x} \Delta x,  y - \frac{\partial f}{\partial y} \Delta y \right)
\end{eqnarray}
where $x', y'$ are updated variables. This implies that an updated $z'$ can be expressed as,   
\begin{eqnarray}\label{eq:proof2}
    z' = x + iy - \left[ \frac{\partial f}{\partial x} \Delta x + i \frac{\partial f}{\partial y} \Delta y \right].
\end{eqnarray}
Meanwhile,
\begin{eqnarray}
    \frac{\partial f}{\partial \bar{z}} = \frac{1}{2} \left( \frac{\partial f}{\partial x} + i \frac{\partial f}{\partial y} \right). 
\end{eqnarray}
When the learning rate is set to \( \Delta x = \Delta y = \xi/2 \), we obtain, 
\begin{eqnarray}
    z' = z - \xi \frac{\partial f}{\partial \bar{z}},
\end{eqnarray}
which is the gradient descent formula used in our case.

On the other side, we substitute $ f(\bm{z}) =\bm{z}^{\dagger \otimes p} \mathcal{F} \bm{z}^{\otimes p} =f(\bm{z}, \bm{\bar{z}}) = \bm{\bar{z}}^{T \otimes p} \mathcal{F} \bm{z}^{\otimes p}$ into partial derivative, then
\begin{eqnarray}
    \frac{\partial f}{\partial \bar{z}}=\sum_{\alpha=1}^k \sum_{i=1}^p c_{\alpha,i}(\bm{z}) F_{\alpha,i} \bm{z} \doteq \mathcal{D}(\bm{z})\bm{z},
\end{eqnarray}
This holds on the fact that 
\begin{eqnarray} \label{eq:decomp}
   \mathcal{F}=\sum_{\alpha=1}^k  \otimes_{j=1}^p a_{\alpha, j} F_{\alpha,j} 
\end{eqnarray}
such that $c_{\alpha,i}(\bm{z})=[\prod_{j=1}^{p} (a_{\alpha, j} \bm{z}^{\dagger}F_{\alpha,j}\bm{z})]/\bm{z}^{\dagger}F_{\alpha,i}\bm{z}$. Therefore, 
\begin{eqnarray}
    \mathcal{D}(\bm{z})=(\mathbb{I} \otimes \bm{z}^{\dagger \otimes p-1})\sum_{k=1}^p P_k A P_k (\mathbb{I} \otimes\bm{z}^{\otimes p-1}),
\end{eqnarray}
where $\mathcal{P}_k$ permutes the first and the $k$-th subsystems. Up to this point, the problem formulates as a problem encountered in~\cite{rebentrost2019quantum, li2020quantum, cheng2024polynomial}.

After encoding $\bm{z}$ into a quantum state, we can get 
\begin{equation}
 \mathcal{D}(\bm{z})=Tr_{p-1}\big[\mathbb{I} \otimes \rho_{\bm{z}}^{\otimes p-1}\mathcal{M_D}\big], 
\tag{\ref{eq:D}}   
\end{equation}
which can be realized by LCU or Hamiltonian simulation-based methods with the time complexity of $\mathcal{O}(poly(p,\mathcal{F}, \log d))$. 
\end{proof}

\section{Implementation of Quantum Gradient}
\label{supp:QGA}

This section provides detailed instructions on the implementation of the quantum gradient algorithm.
That is, to implement the following iterative steps:
\begin{eqnarray}\label{iteration_eq}
   \ket{\bm{z}} \leftarrow \ket{\bm{z}} \pm \xi \mathcal{D} \ket{\bm{z}},
\end{eqnarray}
where the parameter-dependent operator \(\mathcal{D}(\bm{z})\) is the gradient operator, and \(\ket{\bm{z}}\) is assumed to encode the variable. 

\subsection{Quantum Gradient Part with Oracles to query Coefficient Matrix }

\textit{Algorithmic Method for Polynomial-type Optimization ---}
For polynomial-type optimization, we introduce the quantum gradient algorithm, utilizing oracles for efficient querying of the coefficient matrix. This is achieved through the quantum matrix inversion method via Hamiltonian simulation~\cite{rebentrost2019quantum, li2021optimizing, gao2021quantum}.
Specifically, \(\mathcal{D}\) is defined as,
\begin{eqnarray}\label{DandH}
\mathcal{D} = \operatorname{Tr}_{p-1}\left[\mathbb{I} \otimes \rho_{\bm{z}}^{\otimes p-1} \mathcal{M_D}\right].
\end{eqnarray}
The relevant notations are as follows,
\begin{eqnarray}\label{M_expansion}
  \rho_{\bm{z}} = \ket{\bm{z}}\bra{\bm{z}}, \nonumber \\
  \mathcal{M_D} = \sum_{k=1}^p \mathcal{P}_k \mathcal{F} \mathcal{P}_k,
\end{eqnarray}
where \(\mathcal{M}_{\mathcal{D}}\) and \(\mathcal{F}\) have relations by \(\mathcal{P}_k\), which swaps the first and the \(k\)-th subsystems of the entire space \(\ket{\bm{z}}^{\otimes p}\), as
\[
\mathcal{S}\ket{\bm{z}^1}\otimes\dots\ket{\bm{z}^{k-1}} \otimes \ket{\bm{z}^k}\otimes\ket{\bm{z}^{k+1}}\dots =\ket{\bm{z}^k}\otimes \cdots \ket{\bm{z}^{k-1}}\otimes \ket{\bm{z}^{1}}\otimes\ket{\bm{z}^{k+1}}\cdots.
\]

Eq.~\eqref{iteration_eq} is proven to be implementable on a quantum computer, providing an exponential speed-up in terms of operation complexity. These complexities are listed as:
\begin{eqnarray}
    \mathcal{O}(\log(d))~\text{(space complexity)}, \quad \mathcal{O}(poly(\log(d)))~\text{(time complexity)}.
\end{eqnarray}
The procedure for conducting the quantum gradient algorithm is outlined in Table \ref{algorithm2}.

\floatname{algorithm}{Algorithm}
\begin{algorithm}[H]
   \caption{The quantum gradient algorithm}
   \label{algorithm2} 
   \hspace{\algorithmicindent} 
   \begin{spacing}{1.1}
      \begin{algorithmic}[1]
      \State Initialization:
      \begin{eqnarray}\label{eq:qgalcu}         
      \ket{0}_{lcu}\ket{0}_{d}\ket{0}_e\ket{0}\rightarrow \ket{0}_{lcu}\ket{0}_{d}\ket{0}_e\ket{\bm{z}}-\xi \ket{1}_{lcu}\ket{0}_{d}\ket{0}_e\ket{\bm{z}}
      \end{eqnarray}
      where $\ket{0}_{lcu}$ is applied by
\begin{eqnarray}
    R_y(\xi)= \begin{pmatrix}
        1/\sqrt{1+\xi^2} & \xi/\sqrt{1+\xi^2}\\
        -\xi/\sqrt{1+\xi^2} & 1/\sqrt{1+\xi^2}
    \end{pmatrix}.
\end{eqnarray}

      \State Matrix Inversion on $\ket{1}_{lcu}$: 
      \begin{itemize}
          \item First, Phase estimation of $e^{-i\mathcal{D} t}$,
      \begin{eqnarray}        \ket{1}_{lcu}\ket{0}_{d}\ket{0}_e\ket{\bm{z}} 
         \rightarrow \ket{1}_{lcu}\ket{0}_{d}\sum_n \ket{\widetilde{\lambda}_n}_e \beta_n \ket{n},
      \end{eqnarray}
      where $\widetilde{\lambda}_n$ is the binary approximation of $\lambda_n$, the eigenvalue of applied $\mathcal{D}$.
      \item Then, controlled rotation is applied and only $\ket{0}_d$ part is kept,     
      \begin{eqnarray}
         \ket{1}_{lcu}\ket{0}_{d}\sum_n \ket{\widetilde{\lambda}_n}_e \beta_n \ket{n}&\xrightarrow{C_{R}}&\ket{1}_{lcu} \sum_n \widetilde{\lambda}_{n} \ket{0}_{d} \ket{\widetilde{\lambda}_n}_e \beta_n \ket{n}
      \end{eqnarray}
      \item Finally, phase estimation is inversed,
      \begin{eqnarray}
         \ket{1}_{lcu}\ket{0}_{d} \sum_n \widetilde{\lambda}_n\ket{\widetilde{\lambda}_n}_e \beta_n \ket{n}\rightarrow \ket{1}_{lcu}\ket{0}_d\ket{0}_e\sum_n \widetilde{\lambda}_n \beta_n \ket{n} 
         \rightarrow \ket{1}_{lcu}\ket{0}_d\ket{0}_e\mathcal{D} \ket{\bm{z}},
      \end{eqnarray}
      which accurately established if $\widetilde{\lambda}_n=\lambda_n$. 
      \end{itemize}  
      \State With post-selection, output $\ket{\bm{z}}\pm \xi \mathcal{D}\ket{\bm{z}}$, which is based on $\ket{0}_d\ket{0}_{e}$, which is either result or input of next iteration.
      \end{algorithmic}
     \end{spacing}
\end{algorithm}

To visualize the algorithm, we present a circuit diagram in Figure~\ref{circuit_quantum_loop}. Besides steps depicted in Table.~\ref{algorithm2}, a key component of this algorithm is the implementation of \(e^{-i\mathcal{D}t}\), which will be achieved by querying the coefficient matrix, as detailed in the next section. This crucial step is also represented by the non-trivial section in Figure~\ref{circuit_quantum_loop}.

\begin{figure}[!h]
   \centerline{
   \begin{tikzpicture}[thick]
   \ctikzset{scale=1.8}
   \tikzstyle{every node}=[font=\normalsize,scale=0.9]
     % `surround' is used for the background box.
     \tikzstyle{operator} = [draw,shape=rectangle
     ,fill=white,minimum width=1em, minimum height=1em] 
     \tikzstyle{operator2} = [draw,shape=rectangle,fill=white,minimum width=3em, minimum height=9.5em] 
     \tikzstyle{operator22} = [draw,shape=rectangle,fill=white,minimum width=3em, minimum height=9.5em] 
     \tikzstyle{operator3} = [draw,shape=rectangle,fill=white,minimum width=3em, minimum height=1em] 
     \tikzstyle{operator4} = [draw,shape=rectangle,dashed, minimum width=1.5cm, minimum height=1cm] 
     \tikzstyle{operator5} = [draw,shape=rectangle,dashed, minimum width=5.75cm, minimum height=3cm] 
     \tikzstyle{operator6} = [draw=pink,shape=rectangle,dashed, minimum width=5cm, minimum height=4cm] 
     \tikzstyle{phase} = [fill,shape=circle,minimum size=3pt,inner sep=0pt]
     \tikzstyle{surround} = [fill=blue!10,thick,draw=black,rounded corners=2mm]
     \tikzstyle{ellipsis} = [fill,shape=circle,minimum size=2pt,inner sep=0pt]
     \tikzstyle{ellipsis2} = [fill,shape=circle,minimum size=0.5pt,inner sep=0pt]
     \tikzset{meter/.append style={fill=white, draw, inner sep=5, rectangle, font=\vphantom{A}, minimum width=15, 
     path picture={\draw[black] ([shift={(.05,.2)}]path picture bounding box.south west) to[bend left=40] ([shift={(-.05,.2)}]path picture bounding box.south east);\draw[black,-latex] ([shift={(0,.15)}]path picture bounding box.south) -- ([shift={(.15,-.08)}]path picture bounding box.north);}}}
     \node at (0.6,0) (qin){$\ket{\bm{z}}$};
     \node at (0.6,-1) (qin){$\ket{\bm{z}}$};
     \node at (-0.1,-1.5) (qin){$\ket{\bm{z}}$};
     \node at (-0.1,-2) (qin){$\ket{0}_e$};
     \node at (-0.1,-2.5) (qin){$\ket{0}_d$};
     \node at (-0.1,-3) (qin){$\ket{0}_{lcu}$};
     \node at (0.75,0) (q2){};
     \node[] (end3) at (3.25,0) {} edge [|-] (q2);
     \node[ellipsis] (op22) at (1,-0.35) {} ;
     \node[ellipsis] (op22) at (1,-0.5) {} ;
     \node[ellipsis] (op22) at (1,-0.65) {} ;  
     \node at (0.75,-1) (q3) {};
     \node[] (end3) at (3.25,-1) {} edge [|-] (q3);
     \node at (4.75,0) (q2){};
     \node[] (end3) at (7.25,0) {} edge [|-] (q2);
     \node at (4.75,-1) (q2){};
     \node[] (end3) at (7.25,-1) {} edge [|-] (q2);
     \node at (4.6,0) (qin){$\ket{\bm{z}}$};
     \node at (4.6,-1) (qin){$\ket{\bm{z}}$};
     \node at (0,-1.5) (q3) {};
     \node[] (end3) at (8.75,-1.5) {$\mbox{updated }\ket{\bm{z}}$} edge [<-] (q3);
     \node at (0,-2) (q4) {};
     \node[] (end3) at (8,-2) {} edge [-] (q4);
     \node at (0,-2.5) (q5) {};
     \node[meter] (end3) at (8.5,-2.5) {} edge [-] (q5);
     \node at (0,-3) (q6) {};
     \node[meter] (end3) at (8.5,-3) {} edge [-] (q6);
     \node[ellipsis] (op22) at (5,-0.35) {} ;
     \node[ellipsis] (op22) at (5,-0.5) {} ;
     \node[ellipsis] (op22) at (5,-0.65) {} ;  
     \node at (1,0) (qin){$/$};
     \node at (1,-1) (qin){$/$};
     \node at (0.25,-1.5) (qin){$/$};
     \node at (0.25,-2) (qin){$/$};
     \node at (5,0) (qin){$/$};
     \node at (5,-1) (qin){$/$};
     % \node[operator] (op22) at (1.5,-0) {$U(\bm{\theta})$} ;
     % \node[operator] (op22) at (1.5,-1) {$U(\bm{\theta})$} ;
     % \node[operator] (op22) at (1,-1.5) {$U(\bm{\theta})$} ;
     % \node[operator] (op22) at (5.5,-0) {$U(\bm{\theta})$} ;
     % \node[operator] (op22) at (5.5,-1) {$U(\bm{\theta})$} ;
     \node[operator] (op22) at (0.5,-3) {$R_y(\xi)$} ;
     \node[operator] (op22) at (8,-3) {$R_y(\xi)$} ;
     \node[operator] (op22) at (2,-2) {$H$} ;
     \node[ellipsis] (cc1) at (2.5,-2) {};
     \node[operator2] (op22) at (2.5,-0.75) {$e^{-i\mathcal{M}t/m}$} edge [-] (cc1);
%     \node[operator3] (op23) at (2.5,-1.5) {$e^{-i\mathcal{D}t}$};
     \node[operator] (op22) at (3,-2) {$U_F$} ;
     \node[ellipsis] (cc2) at (4.5,-2) {};
     \node[operator] (op22) at (4.5,-2.5) {$R_y(\lambda)$} edge [-] (cc2);
     \node[ellipsis] (cc2) at (4.5,-3) {} edge [-] (op22);
     \node[operator] (op22) at (6,-2) {$U_F^{-1}$} ;
     \node[ellipsis] (cc3) at (6.5,-2) {};
     \node[operator2] (op22) at (6.5,-0.75) {$e^{+i\mathcal{M}t/m}$} edge [-] (cc3);;
%     \node[operator3] (op23) at (6.5,-1.5) {$e^{i\mathcal{D}t}$};
     \node[operator] (op22) at (7,-2) {$H$} ;
     \draw [cyan, decorate, decoration = {brace}] (1.75,-2.8)--(1.75,-1.3);
     \draw [cyan, decorate, decoration = {brace}] (7.25,-1.3)--(7.25,-2.8);
     \draw [red, decorate, decoration = {brace}] (0.45,-1.2)--(0.45,0.2);
     \draw [red, decorate, decoration = {brace}] (3.3,0.2)--(3.3,-1.2);
     \draw [red, decorate, decoration = {brace}] (4.45,-1.2)--(4.45,0.2);
     \draw [red, decorate, decoration = {brace}] (7.3,0.2)--(7.3,-1.2);
     \node at (7.75,-2.7) (qtex){\textcolor{green}{$QMI$}};
     \node at (3.75,-0.6) (qtex){\textcolor{red}{$\times m $}};
     \node at (7.75,-0.6) (qtex){\textcolor{red}{$\times m$}};
     \node at (3.75,-1.2) (qtex){\textcolor{green}{$e^{-i\mathcal{D} t}$}};
     \node at (7.75,-1.2) (qtex){\textcolor{green}{$e^{+i\mathcal{D} t}$}};
     %\node[operator5] (op4) at (1.85,-0.5) {};
     %\node[operator5] (op4) at (5.85,-0.5) {};
   \end{tikzpicture} } 
   \caption{Circuit diagram to process Eq.~\eqref{iteration_eq}. Component in blue bracket is for Quantum Matrix Inverse and Components in red bracket is for simulating $e^{-i\mathcal{D}t}$.}   
\label{circuit_quantum_loop} 
\end{figure}
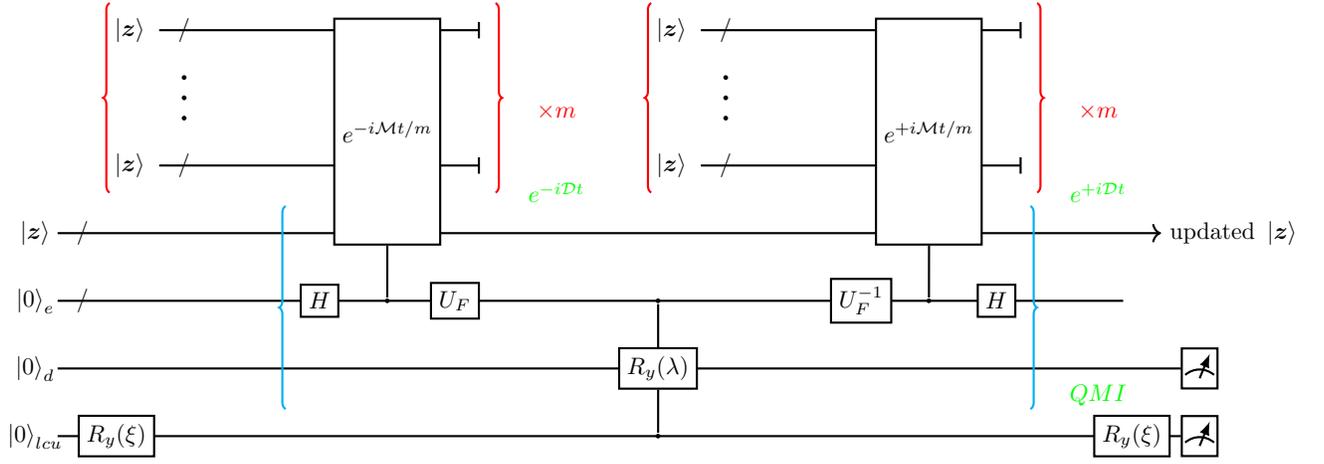

\textit{Implementing gradient operator --- }
The overall framework of the quantum gradient descent algorithm is presented in Table~\ref{algorithm2}, where the implementation of the effective gradient \(\mathcal{D} = Tr_{p-1}\left[\mathbb{I} \otimes \rho_{\bm{x}}^{\otimes p-1} \mathcal{M_D}\right]\) is the central component. In this section, we will provide further details on its implementation.

First, we present all temporary states in Table \ref{algorithm2}. 
Usually, $\mathcal{D}$ is hermitian but not unitary, therefore can not be directly applied in circuit models. But if the evolution $e^{-i\mathcal{D}t}$ can be efficiently realized, we can effectively implement $\mathcal{D}\ket{\psi}$ with the help of an ancillary quantum register in a `$HHL$-like' process as 
\begin{eqnarray}
   \ket{0} \otimes \ket{0}\otimes \ket{\psi}  &&\xrightarrow{H} \ket{0} \otimes \sum_{j=0}^{N-1} \ket{j} \otimes \ket{\psi} \nonumber \\
    && \xrightarrow{C-e^{-i\mathcal{D}t}} \ket{0} \otimes \sum_{j=0}^{N-1} \ket{j}\otimes e^{-i\frac{2\pi}{N}\mathcal{D} j } \ket{\psi}= \ket{0} \otimes \sum_k\sum_{j=0}^{N-1} \beta_k \ket{j}\otimes e^{-i\frac{2\pi}{N}\lambda_k j } \ket{k} \nonumber \\
    && \xrightarrow{QPE}   \ket{0}\otimes \sum_k   \beta_k \ket{\lambda_k} \otimes \ket{k} \nonumber \\
    &&\xrightarrow{C-rotation} \sum_k   \beta_k(\frac{\lambda_k}{C}\ket{0}+  \sqrt{1-|\frac{\lambda_k}{C}|^2}\ket{1}) \otimes  \ket{\lambda_k} \otimes \ket{k} \nonumber \\
    &&\xrightarrow{QPE^{-1}} \sum_k  \sum_{j=0}^{N-1}  \beta_k( \frac{\lambda_k}{C}\ket{0}+\sqrt{1-|\frac{\lambda_k}{C}|^2}\ket{1}) \otimes e^{-i\frac{2\pi}{N} \lambda_k j} \ket{j} \otimes \ket{k} \nonumber \\
    &&=\sum_k  \sum_{j=0}^{N-1}  \beta_k(\frac{\lambda_k}{C}\ket{0}+ \sqrt{1-|\frac{\lambda_k}{C}|^2}\ket{1}) \otimes  \ket{j} \otimes e^{-i\frac{2\pi}{N}\mathcal{D} j } \ket{k} \nonumber \\
    && \xrightarrow{C-e^{i\mathcal{D}t}} \sum_k  \sum_{j=0}^{N-1}  \beta_k(\frac{\lambda_k}{C}\ket{0}+ \sqrt{1-|\frac{\lambda_k}{C}|^2}\ket{1}) \otimes  \ket{j} \otimes \ket{k} \nonumber \\
    && \xrightarrow {H} \sum_k   \beta_k( \frac{\lambda_k}{C}\ket{0}+ \sqrt{1-|\frac{\lambda_k}{C}|^2}\ket{1}) \otimes  \ket{0} \otimes \ket{k} \nonumber \\
    &&\xrightarrow{P=\ket{0}\bra{0}} \sum_k  \beta_k \frac{\lambda_k}{C}\ket{0} \otimes  \ket{0} \otimes \ket{k} \propto \ket{0}\otimes \ket{0} \otimes \mathcal{D}\ket{\psi},
\end{eqnarray}

And in a further step, the evolution $e^{-i\mathcal{D}t}$ can be approximately constructed with another evolution $e^{-i\mathcal{M_D}t}$ and the help of $m(p-1)$ copies of state $\rho_{\bm{x}}=\ket{\bm{x}}\bra{\bm{x}}$, within accuracy $\mathcal{O}(\frac{ t^2p^2 ||\mathcal{F}||_{max}^2}{m})$, in a `quantum principal component analysis(QPCA)' way as 

\begin{align}\label{long_time_evolution}
& \underbrace{Tr_{p-1}[e^{-i \mathcal{M_D} \frac{t}{m}} Tr_{p-1}[e^{-i \mathcal{M_D} \frac{t}{m}} \cdots
    Tr_{p-1}[}_{m-trace} e^{-i \mathcal{M_D} \frac{t}{m}} \rho ^{\otimes p} e^{i \mathcal{M_D} \frac{t}{m}}]\cdots \rho ^{\otimes p-1} e^{i \mathcal{M_D}\frac{t}{m}} ] \rho ^{\otimes p-1} e^{i \mathcal{M}\frac{t}{m}} ]\nonumber \\
&=(e^{-i \mathcal{D} \frac{t}{m}})^m \rho (e^{i \mathcal{D} \frac{t}{m}})^m + \mathcal{O}(m||\mathcal{D}||_{max}^2 \frac{t^2}{m^2}) \nonumber \\
&= e^{-i \mathcal{D}t} \rho e^{i \mathcal{D}t}+ \mathcal{O}(\frac{ t^2p^2 ||\mathcal{F}||_{max}^2}{m}),
\end{align}
since we have $\mathcal{D}=Tr_{p-1}[\mathbb{I}\otimes \rho_{\bm{x}}^{\otimes p-1} \mathcal{M_D}]$ and the evolution  $e^{-i\mathcal{M_D}\frac{t}{m}}$ can be approximated by Trotter expansion
\begin{eqnarray}\label{trotter_expansion}
e^{-i\mathcal{M_D} \frac{t}{m}}=\prod_{k=1}^p \mathcal{P}_k e^{-i \mathcal{F} \frac{t}{m}} \mathcal{P}_k+ \mathcal{O}(\frac{t^2 p^2||\mathcal{F}||_{max}^2}{m^2}),
\end{eqnarray}
which is shown in Figure \ref{realizing_d}.

\begin{figure}[!h]
   \centerline{
   \begin{tikzpicture}[thick]
   \ctikzset{scale=1.4}
   \tikzstyle{every node}=[font=\normalsize,scale=0.7]
     % `surround' is used for the background box.
     \tikzstyle{operator} = [draw,shape=rectangle
     ,fill=white,minimum width=1em, minimum height=1em] 
     \tikzstyle{operator2} = [draw,shape=rectangle,fill=white,minimum width=3em, minimum height=9.5em] 
     \tikzstyle{operator22} = [draw,shape=rectangle,fill=white,minimum width=3em, minimum height=9.5em] 
     \tikzstyle{operator3} = [draw,shape=rectangle,fill=white,minimum width=3em, minimum height=1em] 
     \tikzstyle{operator4} = [draw,shape=rectangle,dashed, minimum width=1.5cm, minimum height=1cm] 
     \tikzstyle{operator5} = [draw,shape=rectangle,dashed, minimum width=5.75cm, minimum height=3cm] 
     \tikzstyle{operator6} = [draw=pink,shape=rectangle,dashed, minimum width=5cm, minimum height=4cm] 
     \tikzstyle{phase} = [fill,shape=circle,minimum size=3pt,inner sep=0pt]
     \tikzstyle{surround} = [fill=blue!10,thick,draw=black,rounded corners=2mm]
     \tikzstyle{ellipsis} = [fill,shape=circle,minimum size=2pt,inner sep=0pt]
     \tikzstyle{ellipsis2} = [fill,shape=circle,minimum size=0.5pt,inner sep=0pt]
     \tikzset{meter/.append style={fill=white, draw, inner sep=5, rectangle, font=\vphantom{A}, minimum width=15, 
     path picture={\draw[black] ([shift={(.05,.2)}]path picture bounding box.south west) to[bend left=40] ([shift={(-.05,.2)}]path picture bounding box.south east);\draw[black,-latex] ([shift={(0,.15)}]path picture bounding box.south) -- ([shift={(.15,-.08)}]path picture bounding box.north);}}}
     \node at (0.6,0) (qin){$\ket{\bm{z}}$};
     \node at (0.6,-1) (qin){$\ket{\bm{z}}$};
     \node at (0.1,-1.5) (qin){$\ket{\bm{z}}$};
     \node at (0.75,0) (q2){};
     \node[] (end3) at (3.25,0) {} edge [|-] (q2);
     \node[ellipsis] (op22) at (1,-0.35) {} ;
     \node[ellipsis] (op22) at (1,-0.5) {} ;
     \node[ellipsis] (op22) at (1,-0.65) {} ;  
     \node at (0.75,-1) (q3) {};
     \node[] (end3) at (3.25,-1) {} edge [|-] (q3);
     \node at (0.3,-1.5) (q3) {};
     \node[] (end3) at (3.25,-1.5) {} edge [-] (q3);
     \node at (1,0) (qin){$/$};
     \node at (1,-1) (qin){$/$};
     \node at (0.5,-1.5) (qin){$/$};
     % \node[operator] (op22) at (1.5,-0) {$U(\bm{\theta})$} ;
     % \node[operator] (op22) at (1.5,-1) {$U(\bm{\theta})$} ;
     % \node[operator] (op22) at (1,-1.5) {$U(\bm{\theta})$} ;
     \node[operator2] (op22) at (2.5,-0.75) {$e^{-i\mathcal{M}t/m}$};
     \draw [red, decorate, decoration = {brace}] (0.45,-1.2)--(0.45,0.2);
     \draw [red, decorate, decoration = {brace}] (3.3,0.2)--(3.3,-1.2);
     \node at (4,-0.75) (eq1) {$\Leftarrow$};
     \node at (4.6,0) (qin){$\ket{\bm{z}}$};
     \node at (4.6,-1) (qin){$\ket{\bm{z}}$};
     \node at (4.1,-1.5) (qin){$\ket{\bm{z}}$};
     \node at (4.75,0) (q2){};
     \node[] (end3) at (9.,0) {} edge [-] (q2);
     \node at (9.2,0) (q2){};
     \node[] (end3) at (11.75,0) {} edge [|-] (q2);
     \node[ellipsis] (op22) at (5,-0.35) {} ;
     \node[ellipsis] (op22) at (5,-0.5) {} ;
     \node[ellipsis] (op22) at (5,-0.65) {} ;  
     \node at (4.75,-1) (q3) {};
     \node[] (end3) at (9.,-1) {} edge [-] (q3);
     \node at (9.2,-1) (q3){};
     \node[] (end3) at (11.75,-1) {} edge [|-] (q3);
     \node at (4.3,-1.5) (q3) {};
     \node[] (end3) at (9.,-1.5) {} edge [-] (q3);
     \node at (9.2,-1.5) (q3){};
     \node[] (end3) at (11.75,-1.5) {} edge [-] (q3);
     \node at (5,0) (qin){$/$};
     \node at (5,-1) (qin){$/$};
     \node at (4.5,-1.5) (qin){$/$};
     % \node[operator] (op22) at (5.5,-0) {$U(\bm{\theta})$} ;
     % \node[operator] (op22) at (5.5,-1) {$U(\bm{\theta})$} ;
     % \node[operator] (op22) at (5,-1.5) {$U(\bm{\theta})$} ;
     \node[operator2] (op22) at (6.5,-0.75) {$e^{-i\mathcal{F}t/m}$};
     \node[operator2] (op22) at (8.,-0.75) {$e^{-i\mathcal{F}t/m}$};
     \node[operator2] (op22) at (10.5,-0.75) {$e^{-i\mathcal{F}t/m}$};
      \node at (7.25,-1) (cr121) {$\times$};
      \node at (7.25,-1.5) (cr211) {$\times$};
      \node[ellipsis2] at (7.25,-1) (ep121) {};
      \node[ellipsis2] at (7.25,-1.5) (ep211) {};
      \draw[-] (ep121) -- (ep211);
      \node at (8.75,-1) (cr121) {$\times$};
      \node at (8.75,-1.5) (cr211) {$\times$};
      \node[ellipsis2] at (8.75,-1) (ep121) {};
      \node[ellipsis2] at (8.75,-1.5) (ep211) {};
      \draw[-] (ep121) -- (ep211);
      \node at (9.75,-0) (cr121) {$\times$};
      \node at (9.75,-1.5) (cr211) {$\times$};
      \node[ellipsis2] at (9.75,-0) (ep121) {};
      \node[ellipsis2] at (9.75,-1.5) (ep211) {};
      \draw[-] (ep121) -- (ep211);
      \node at (11.25,-0) (cr121) {$\times$};
      \node at (11.25,-1.5) (cr211) {$\times$};
      \node[ellipsis2] at (11.25,-0) (ep121) {};
      \node[ellipsis2] at (11.25,-1.5) (ep211) {};
      \draw[-] (ep121) -- (ep211);
      \node[ellipsis] at (9,-0.75) (ep211) {};
      \node[ellipsis] at (9.1,-0.75) (ep211) {};
      \node[ellipsis] at (9.2,-0.75) (ep211) {};
     \draw [red, decorate, decoration = {brace}] (4.45,-1.2)--(4.45,0.2);
     \draw [red, decorate, decoration = {brace}] (11.75,0.2)--(11.75,-1.2);
   \end{tikzpicture} } 
   \caption{Schematic of implementation of $e^{-i \mathcal{M} \frac{t}{m}}$. The right hand side shows the series performance of $\{\mathcal{P}_k\}$ with $k$ from $1$ to $p$. The "$/$" means multi qubits and "$|$" means dropped of registers in the last of the circuits.}   
\label{realizing_d} 
\end{figure}
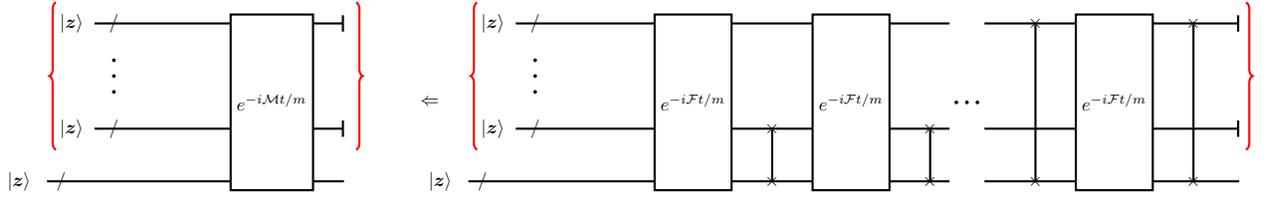

Lastly, as known in general quantum singling processing(QSP) Hamiltonian simulation\cite{low2017optimal}, $e^{-i\mathcal{F}\frac{t}{m}}$ can be simulated within accuracy $\epsilon_h$ at the cost of $\mathcal{O}( sp||\mathcal{F}||_{max} \frac{t}{m} + \frac{\log{1/\epsilon_h}}{\log{\log{1/\epsilon_h}}})
$ queries of $U_1$, $U_2$ and $\mathcal{O}\big(( sp||\mathcal{F}||_{max}\frac{t}{m} + \frac{\log{1/\epsilon_h}}{\log{\log{1/\epsilon_h}}})(d^p +q poly(\log{q}))\big)$ extra basic quantum gates, where $s$ denotes the sparsity of $\mathcal{F}$, $q$ is the bit-accuracy of elements of $\mathcal{F}$, and the two Oracles,
\begin{eqnarray}\label{oracles}
    U_1 \ket{j,k}\ket{0}&&=\ket{j,k}\ket{\mathcal{F}_{j,k}}, \nonumber \\
    U_2 \ket{j,l}&&=\ket{j,k_\mathcal{F}(j,l)}.
\end{eqnarray}

To sum up, we implement the effective gradient $\mathcal{D}$ in the following logical flow,

\begin{eqnarray}
   U_1,U_2 \xrightarrow{QSP}e^{-i\mathcal{F}\frac{t}{m}} \xrightarrow {Trotter} e^{-i\mathcal{M_D}\frac{t}{m}} \xrightarrow{QPCA} e^{-i\mathcal{D}t} \xrightarrow{HHL-like} \mathcal{D}.
\end{eqnarray}

More details have been described in related work in \cite{2019-Rebentrost-Qgradient,gao2021quantum}.

\subsection{Quantum Gradient algorithm with $\mathcal{F}$ Efficiently Decomposed}

If $\mathcal{F}$, the coefficient matrix can be decomposed efficiently, there is alternative method.   
We emphasize one case with the strengthened condition, where $\mathcal{F}$ can be represented by a tensor product of Hermitian operator, which is geometry k-local. Specifically, 
\begin{eqnarray}\label{ee_decomposition}
    \mathcal{F}&=& \otimes_{\alpha=1}^p \otimes_{j=1}^m P_{\alpha,j},   \nonumber \\
    \mathcal{D}&=&\sum_{\alpha=1}^{p} \prod_{\beta\neq\alpha} \bra{\bm{z}} \otimes_{j=1}^m P_{\beta,j} \ket{\bm{z}} \otimes_{j=1}^m P_{\alpha,j},
\end{eqnarray}
where $d = \prod_{j=1}^m \mbox{dimension} (P_{\alpha,j})$ and $\mbox{dimension} (P_{\alpha,j}) \le 2^k$.
Under the condition that $\xi$ is small, we can interpret $I- \xi \mathcal{D}$, which is equivalent to realize Eq.~\eqref{iteration_eq}, as,
\begin{eqnarray}\label{qite}
   e^{-\xi \mathcal{D}} \sim \prod_{\alpha=1}^p  \otimes_{j}^m e^{-\xi \prod_{\beta\neq\alpha} \bra{\bm{z}} \otimes_{j=1}^m P_{\beta,j} \ket{\bm{z}} \otimes_{j=1}^m P_{\alpha,j} }.
\end{eqnarray} 
If $\ket{\bm{z}}$ is chosen as a tensor product state, $ \prod_{\beta\neq\alpha}\bra{\bm{z}} \otimes_{j=1}^m P_{\beta,j} \ket{\bm{z}} \equiv b_{\beta}$ can be efficiently evaluated with $\mathcal{O}(2^L)$ times calculations, where $L\ge k$ is a correlation length.

Here, we introduce two method to simulate $e^{-\xi \mathcal{D}}$.

The first is the one proposed in Ref.\cite{motta2020determining}. 
We set $e^{-i\xi D}= \prod_{\alpha} e^{-i\xi D_{\alpha}}$ and $D_{\alpha} =\sum_{k} a_{\alpha, k} \sigma_{\alpha,k}$.
By approximating $e^{-i\xi D_{\alpha}} \ket{x} = \frac{1}{c} \otimes_{j}^m e^{-\xi b_{\beta}  P_{\alpha,j} }\ket{x}$, a system of linear equation formulates as 
\begin{eqnarray}\label{linear_equation}
  \sum_{k} a_{\alpha, k} \bra{\bm{z}} \sigma^{\dagger}_{\alpha,k'} \sigma_{\alpha,k} \ket{\bm{z}} = \frac{-ib_{\beta}}{c} \bra{\bm{z}} \sigma^{\dagger}_{\alpha,k'} \otimes_{j}^m  P_{\alpha,j} \ket{\bm{z}},
\end{eqnarray}
where $c=\sqrt{\bra{\bm{z}} e^{-2\xi\mathcal{D}}\ket{\bm{z}}}$ is the re-normalization factor.
For the initial step($e^{-i\xi D_{1}}$), one can choose a tensor product state, and coefficients in Eq.~\eqref{linear_equation} can be efficiently obtained. By solving this system, $e^{-i\xi D_{1}}$ is easy to determine. 
For $e^{-i\xi D_{\alpha}}$($\alpha>1$), as $\prod_{\beta=1}^{\alpha-1} e^{-i\xi D_{\beta}}$ has been applied in advance, $\ket{\psi}$ will be non-local and the correlation length will grow with $\alpha$. 
In Ref~\cite{motta2020determining}, one way is to simulate with a larger correlation length($L\ge k$). However, the correlation length will get quickly larger as with $\alpha$ and iterative update with gradient.The other way is an approximate method, where the accuracy depends on the real correlation length and the truncated simulated length.

The second is one with LCU. 
If $P_{\alpha,j}$ in Eq.~\eqref{ee_decomposition} is a $d$-dimension operator that can be presented by Pauli matrix. The method of LCU can be employed to realize the target process with a specified probability.
The gradient operator can be rewritten as,  
\begin{eqnarray}\label{dqc1}
    \mathcal{D}=\sum_{\alpha=1}^{p} \sum_{i=1}^{m} \frac{\prod_{j} \bra{\bm{z}}P_{\alpha,j}\ket{\bm{z}}}{\bra{\bm{z}}F_{\alpha,i}\ket{\bm{z}}} F_{\alpha,i}=\sum_{\alpha=1}^p \sum_{i=1}^m c_{\alpha,i} F_{\alpha,i},
\end{eqnarray} 
where $c_{\alpha,i}$ are the parameters to be determined in advance. In additional experiments, $\mathcal{O}(kp)$ times measurements on $P_{\alpha,j}$ are required for obtaining $c_{\alpha,i}$. The complexity is similar to that of variant quantum algorithm, where each decomposition of $\mathcal{F}$ are required to be measured for values of cost function.
As with realizing $I-\xi \mathcal{D}$, two ancillary registers are required.
One is a single qubit to implement $I-\xi \mathcal{D}$ and the second is $log(kp)$-qubit to implement $\mathcal{D}$. 
Circuit in Fig.~\ref{circuit_lcu} shows the method, where the principal register output $\ket{\bm{z}}$ if post-selections are conducted.
This is with a success possibility at $\mathcal{O}(1/(kp)^2)$ to observe the $\ket{0}_e\ket{0}_d$ subsystem.
More details have been described in related work in \cite{cheng2024polynomial}.

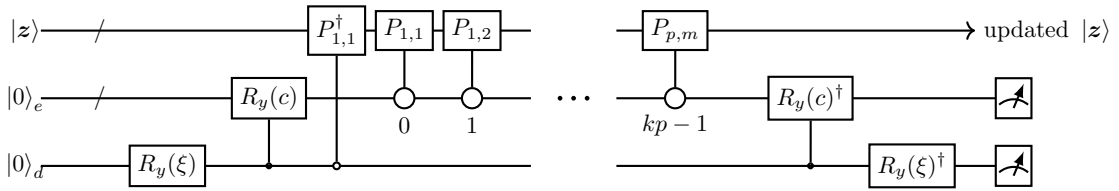
\begin{figure}[!h]
   \centerline{
   \begin{tikzpicture}[thick]
   \ctikzset{scale=1.8}
   \tikzstyle{every node}=[font=\normalsize,scale=0.9]
     % `surround' is used for the background box.
     \tikzstyle{operator} = [draw,shape=rectangle
     ,fill=white,minimum width=1em, minimum height=1em] 
     \tikzstyle{operator2} = [draw,shape=rectangle,fill=white,minimum width=3em, minimum height=9.5em] 
     \tikzstyle{operator22} = [draw,shape=rectangle,fill=white,minimum width=3em, minimum height=9.5em] 
     \tikzstyle{operator3} = [draw,shape=rectangle,fill=white,minimum width=3em, minimum height=1em] 
     \tikzstyle{operator4} = [draw,shape=rectangle,dashed, minimum width=1.5cm, minimum height=1cm] 
     \tikzstyle{operator5} = [draw,shape=rectangle,dashed, minimum width=5.75cm, minimum height=3cm] 
     \tikzstyle{operator6} = [draw=pink,shape=rectangle,dashed, minimum width=5cm, minimum height=4cm] 
     \tikzstyle{phase} = [fill,shape=circle,minimum size=3pt,inner sep=0pt]
     \tikzstyle{surround} = [fill=blue!10,thick,draw=black,rounded corners=2mm]
     \tikzstyle{ellipsis} = [fill,shape=circle,minimum size=2pt,inner sep=0pt]
     \tikzstyle{ellipsis1} = [draw,shape=circle,minimum size=2pt,inner sep=0pt]
     \tikzstyle{ellipsis3} = [draw,shape=circle,fill=white, minimum size=2pt,inner sep=1pt]
     \tikzstyle{ellipsis4} = [draw,shape=circle,fill=white, minimum size=4pt,inner sep=3pt]
     \tikzstyle{ellipsis2} = [draw,shape=circle,minimum size=0.5pt,inner sep=0pt]
     \tikzset{meter/.append style={fill=white, draw, inner sep=5, rectangle, font=\vphantom{A}, minimum width=15, 
     path picture={\draw[black] ([shift={(.05,.2)}]path picture bounding box.south west) to[bend left=40] ([shift={(-.05,.2)}]path picture bounding box.south east);\draw[black,-latex] ([shift={(0,.15)}]path picture bounding box.south) -- ([shift={(.15,-.08)}]path picture bounding box.north);}}}
     \node at (0.2,-1.5) (qin){$\ket{\bm{z}}$};
     \node at (0.2,-2) (qin){$\ket{0}_e$};
     \node at (0.2,-2.5) (qin){$\ket{0}_d$};
     \node at (0.25,-1.5) (q3) {};
     \node[] (end3) at (4,-1.5) {} edge [-] (q3);
     \node at (4.5,-1.5) (q3) {};
     \node[] (end3) at (7.75,-1.5) {$\mbox{updated}~\ket{\bm{z}}$} edge [<-] (q3);
     \node at (0.25,-2) (q4) {};
     \node[] (end3) at (4,-2) {} edge [-] (q4);
     \node at (4.5,-2) (q4) {};
     \node[meter] (end3) at (7.5,-2) {} edge [-] (q4);
     \node at (0.25,-2.5) (q5) {};
     \node[] (end3) at (4,-2.5) {} edge [-] (q5);
     \node at (4.5,-2.5) (q5) {};
     \node[meter] (end3) at (7.5,-2.5) {} edge [-] (q5);
     \node[ellipsis] (cc1) at (4.15,-2) {};
     \node[ellipsis] (cc1) at (4.25,-2) {};
     \node[ellipsis] (cc1) at (4.35,-2) {};
     \node at (0.75,-1.5) (qin){$/$};
     \node at (0.75,-2) (qin){$/$};
     % \node[operator] (op22) at (1.25,-1.5) {$U(\bm{\theta})$} ;
     \node[operator] (op22) at (1.25,-2.5) {$R_y(\xi)$} ;
     \node[ellipsis1] (cc1) at (2,-2.5) {};
     \node[operator] (op22) at (2,-2) {$R_y(c)$} edge [-] (cc1);
     \node[ellipsis3] (cc2) at (2.5,-2.5) {};
     \node[operator] (op22) at (2.5,-1.5) {$P_{1,1}^{\dagger}$} edge [-] (cc2);
     \node[ellipsis4] (cc2) at (3,-2) {};
     \node[operator] (op22) at (3,-1.5) {$P_{1,1}$} edge [-] (cc2);
     \node (cc2) at (3,-2.2) {$0$};
     \node[ellipsis4] (cc2) at (3.5,-2) {};
     \node[operator] (op22) at (3.5,-1.5) {$P_{1,2}$} edge [-] (cc2);
     \node (cc2) at (3.5,-2.2) {$1$};
     \node[ellipsis4] (cc2) at (5,-2) {};
     \node[operator] (op22) at (5,-1.5) {$P_{p,m}$} edge [-] (cc2);
     \node (cc2) at (5,-2.2) {$kp-1$};
     \node[ellipsis1] (cc1) at (6,-2.5) {};
     \node[operator] (op22) at (6,-2) {$R_y(c)^{\dagger}$} edge [-] (cc1);
     \node[operator] (op22) at (6.75,-2.5) {$R_y(\xi)^{\dagger}$} ;
   \end{tikzpicture} } 
   \caption{Circuit to process Eq.~\eqref{iteration_eq} with method of linear combination of unitaries. }   
\label{circuit_lcu} 
\end{figure}

\section{Quantum circuit synthesis with reinforcement learning}
\label{supp:RL}
For a given initial state $\ket{z}$ and target state $\ket{z'}$ , the RL algorithm learns a policy that generates a PQC (by adding gates sequentially), which transform $\ket{z}$ to approximate $\ket{z'}$. The policy is learned by Proximal Policy optimization (PPO) algorithm \cite{schulman2017proximal} with the built-in MLP functional approximator. The reward is set to 
\begin{equation}
r = \left\{ 
   \begin{array}{ll}
           10 & if \quad f \ge f_t   \\
           -5.0 & if \quad f< f_t \quad and \quad circuit\_len >= max\_circuit\_len  \\
           \max(\frac{f-f_{prev}}{f_t - f_{prev}}, -1.0) - p & elsewise
    \end{array}
    \right.
\end{equation}
where $f = \bra{z'}U({\theta})\ket{z}$ is the fidelity of current step (circuit), $f_{prev}$ is the fidelity from last step of the episode and $f_t$ is a pre-defined fidelity threshold. $p$ is a penalty of adding a gate to encourage shallow circuit which makes the agent always trying to find the shortest possible protocol to prepare the target state.  The fidelity $f$ is obtained by optimizing all parameters $\{\theta\}$ of the  quantum circuit for current step. The optimization is done using the COBYLA algorithm in scipy optim package. The episode terminates if fidelity value is great than $f_t$ or the circuit depth(number of gates) exceeds the maximum allowed circuit depth.  This is to prevent episodes, especially at the beginning of training,  become exceedingly long leading to unfeasible training times. All the quantum circuit simulation is done using the Qulacs python package \cite{suzuki2021qulacs}.

The pseudo-code of reinforcement algorithm for PQC learning is concluded in Table~\ref{pqc_learning}.
\begin{figure}[H]
\floatname{algorithm}{Table}
\begin{algorithm}[H]
\caption{Reinforcement Learning Procedure for PQC}
\label{pqc_learning} 

\begin{algorithmic}[1]
\State \textbf{Input:} Initial state $\ket{z}$,  target state $\ket{z}'$, maximum circuit depth $d_{max}$, reward penalty $p$, fidelity threshold $f_{t}$, convergence threshold $\epsilon$
\State \textbf{Output:} Learned policy for circuit synthesis and corresponding PQC
\State Initialize gym environment env($\ket{z}$, $\ket{z}'$, $d_{max}$, $p$,  $f_{t}$)
\State $\theta_{\text{prev}} \gets \theta$
\State Define actions (quantum gates set ${g_{i}(\theta_i)}$), reward function and termination condition.
\Repeat
    \State Learn PPO($\gamma$, $n_{epochs}$, $clip\_range$, $learning\_rate$)
\Until{maximum learning steps}

\State Reset environment
\Repeat
     \State Predict action for the current state: $a$ = PPO$_{opt}(state)$
     \State Take a step: $state$, $reward$, $done$, $info$ = env.step($a$)
\Until{$done$}

\State \textbf{Return:} Learned policy $PPO_{opt}$ and parameterized quantum circuit

\end{algorithmic}
\end{algorithm}
\end{figure}

For both examples (Max-cut problem  and polynomial optimization), we choose from the gate set consisting of single qubit Pauli rotation gates and two-bit Pauli rotation gates, namely {$R_X(\theta)$, $R_Y(\theta)$, $R_Z(\theta)$, $R_{XX}(\theta)$, $R_{YY}(\theta)$, $R_{ZZ}(\theta)$} (by definition $R_{X}(\theta) = exp(-\textit{i}\frac{\theta}{2}\textit{X})$, $R_{XX}(\theta) = exp(-\textit{i}\frac{\theta}{2}\textit{X}\times\textit{X}) $ etc.). In the learning process, we set the maximum circuit depth to be 10.  The parameter setting of PPO algorithm is $\gamma = 0.99$, $n_{epochs} = 4$, $clip\_range = 0.2$,  $learning\_rate = 0.0001$.

We tested the algorithm on random 4-qubit state with different quantum gradient step size, resulting in different target state. Figure \ref{test_r1} shows the influence of number of learning episodes on the fidelity of state after learned PQC and the target state with different. We can see that, in general, the RL algorithm can approximate target state better with increased learning episode, but reaching some plateau potentially restricted by the maximum episode length (e.g. circuit depth). 

\begin{figure}[!ht]
   \centering
   \includegraphics[width=0.5\columnwidth]{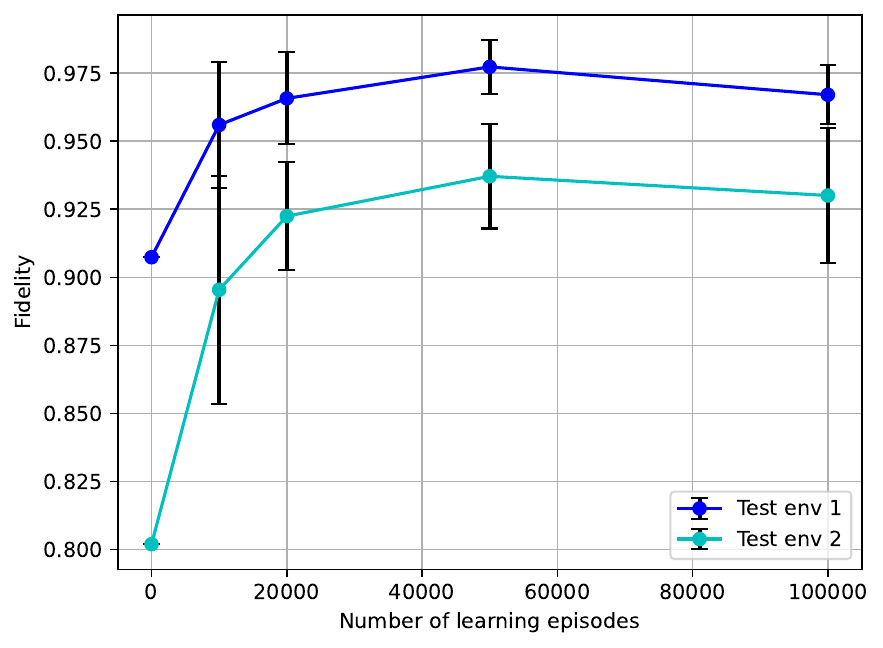}
   \caption{Influence of number of learning episodes on learned policy: two environments are tested, with different target state. Mean and standard deviation of fidelity values are calculated from five independent runs (fidelity between initial state and target states is shown at $x=0$).} 
\label{test_r1} 
\end{figure}

% we show that the algorithm can generate a relatively shallow PQC that achieve a high fidelity.

% Table D1. Numerical results on PQC synthesis for the Max-cut problem .

% \begin{tabular}{ | m{1.5cm} | m{6em} | m{6em}| m{6em} | } 
%   \hline
%           & Env1 ($f_{ini}$=0.762) & Env2 ($f_{ini}$=0.850) & Env3 ($f_{ini}$=0.974)\\ 
%   \hline
%   Fidelity & 0.908 & 0.9602 & 0.990\\ 
%   \hline
%   Circuit depth & 10 & 7 & 5\\ 
%   \hline
% \end{tabular}

\section{Details on Numerical simulation}
\label{supp:simu}

\textit{Supplementary for Numerical Results in Manuscript ---}
First, we specify our problems for simulation. 
The first is a Max-cut problem. Suppose there are $n$ vortex in the graph, the observable that we aim to maximize is
\[
H_c = \sum_{\langle i,j \rangle} \frac{1 - \sigma_{z_i} \sigma_{z_j}}{2},
\]
where \(\sigma_{z_i}\) and \(\sigma_{z_j}\) represent the Pauli-Z operators acting on qubits \(i\) and \(j\), $\langle i,j \rangle$ exists only there is an edge. 

We are noticed of the results by Edward~\cite{farhi2014quantum}.
The mixer Hamiltonian can be chosen as 
\begin{eqnarray}
    H_b=\sum_{i=1}^n \sigma_{x_i},
\end{eqnarray}
and can be used as QAOA method, where $\sigma_{x_i}$ is the Pauli-X operators acting on qubits \(i\). The results indicate that if the graph is a ring(that is, regular-2 graph), the maximum value of the objective function $\langle H_c \rangle$, which is estimated under state output by a QAOQ circuit is bounded as $n(2p+1)/(2p+2)$, where $p$ is the depth of QAOA circuits.
Therefore, increasing $p$ will produce a better approximate ratio for more general graphs.

Specifically, we consider a graph, which is a random graph with 4 vortex. 
\begin{figure}[!ht]
   \centering
   \includegraphics[width=0.6\columnwidth]{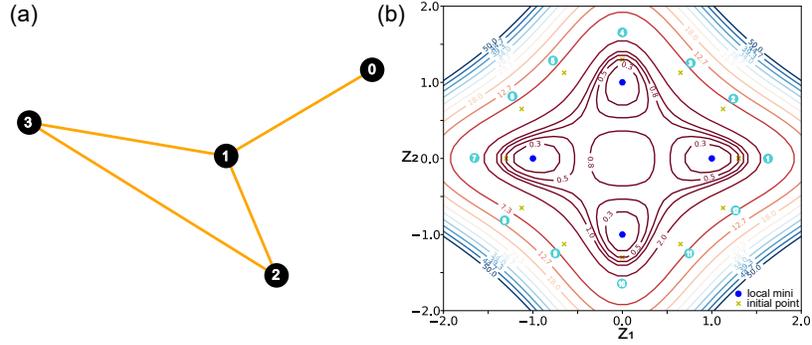}
   \caption{(a) is a random graph with 4 vortex. (b) is the contour plot of a polynomial within $z_3=0$ plane} 
\label{fig_graph} 
\end{figure}
The corresponding Hamiltonian is depicted as 
\[
H_c = \sum_{\langle i,j \rangle} \frac{1 - \sigma_{z_i} \sigma_{z_j}}{2}, 
\]
where $\langle i,j \rangle$ exists only there is an edge in Figure \ref{fig_graph}(a). 

In our simulation, we set $4$ quantum circuit as our initialization.
The first one formulates as a single layer QAOA ansatz, 
\[
e ^{-i H_b \beta}e^{-iH_c \gamma}, 
\]
where $H_b =\sum_{i=0}^4 \sigma_{x_i}$. The second one is a modified single layer QAOA ansatz, 
\[
e ^{-i H_b \beta}R_x(\alpha)e^{-iH_c \gamma},
\]
where $R_x(\alpha)$ is a global rotation on x-axis with $\alpha$.
The third one is inspired by hardware efficient ansatz, which is 
\[
e ^{-i \sigma_{z_0}\sigma_{z_3} \alpha}e ^{-i \sigma_{z_1}\sigma_{z_2} \beta}e ^{-i \sigma_{x_2}\sigma_{x_3} \gamma}e ^{-i \sigma_{x_0}\sigma_{x_1} \eta}. 
\]
The fourth one is 
\[
e ^{-i \sigma_{z_0}\sigma_{z_3} \alpha}e ^{-i \sigma_{z_1}\sigma_{z_2} \beta}R_y(\alpha)e ^{-i \sigma_{x_2}\sigma_{x_3} \gamma}e ^{-i \sigma_{x_0}\sigma_{x_1} \eta},
\]
where $R_y(\alpha)$ is a global rotation on x-axis with $\alpha$.
Then using typical method we can find $4$ corresponding sub-optimal parameter configurations which is not the ideal answer but with gradients vanished. 
Then we use these circuit and their parameter configurations as our initialization and conduct the NOM as depicted in pseudo-codes. Based on this, we can get the results of demonstration in the main text.

The second problem to show in our simulation is to optimize a polynomial. Polynomials are a kind of function that has many applications. For example,  nonlinear equations and linear regression. Here we show a case that is optimizing of a polynomial
\begin{eqnarray}
  f=&&z_{1}^{2}\overline{z_{1}}^{2}  +  z_{2}^{2} \overline{z_{2}}^{2} +z_{3}^{2}\overline{z_{3}}^{2} + 2 (\overline{z_{1}}^{2}z_{2}^{2}+  z_{1}^{2} \overline{z_{2}}^{2} )  \nonumber \\
  && +6z_{1} z_{2} \overline{z_{1}} \overline{z_{2}} - 2 z_{1} z_{3} \overline{z_{1}} \overline{z_{3}}  - 2 z_{2} z_{3} \overline{z_{2}} \overline{z_{3}}  \nonumber \\
  &&- 2 z_{1} \overline{z_{1}}- 2 z_{2} \overline{z_{2}}  + 6 z_{3} \overline{z_{3}}  \nonumber \\
  && +  2 ( z_{3}^{2}+\overline{z_{3}}^{2}) + 1
\end{eqnarray}
which can be expressed as 
\begin{eqnarray}
    f(Z)=Z^{\otimes 2} A Z^{\dagger  \otimes 2},
\end{eqnarray}
where $Z=(1,z_1,z_2,z_3)$ and the coefficient matrix 
\begin{eqnarray}
    A=X^{\otimes 4}+ Y^{\otimes 4}+Z^{\otimes 4}.
\end{eqnarray}

It is observed that the cost function features isolated local minima within the region \((z_1,z_2,z_3) \in [-2, 2]^{\otimes 3}\), and our simulation focuses on optimizing within this region. In Figure~\ref{fig_graph}(b), we present a case where \(z_3 = 0\). We start with $12$ initial points, each positioned on a circle with a radius of $1.3$. To generate these initial points, we use a parameterized circuit to approximate them via a state-to-state method. These circuit with their parameter configurations serve as our initialization for the method. Then we conduct the NOM as depicted in pseudo-codes. Based on this, we can get the results of demonstration in the main text.

% \subsection{Reduce exponential-growing cost}
% \begin{enumerate}
%     \item Nonlinear equations: two body elastic collision problems:
% \begin{eqnarray}
%     \sum_{i} P_i=\mathcal{P}, \quad \sum_{i}\frac{P^2_i}{2M_i}=\mathcal{E},
% \end{eqnarray}
% \item Regression: linear regression
% \begin{eqnarray}
%     AX=0, \quad  \rightarrow \quad \text{min}\mathcal{L}=X^T A^TA X\sim0,
% \end{eqnarray}
% where the variables $X=(1,x_1,\cdots x_d)^T$. A is $m\times (d+1)$ matrix, while $A^T A$ is $(d+1)\times (d+1)$  square matrix.
% \end{enumerate}

% The examples:

\textit{Influence by PQC Approximation Method ---}
We additionally investigate how inaccuracy of PQC approximation method affects the protocol, which is RL method work for. In this simulation, we consider the same demonstration problems and the settings are the same as previous ones.

Figures~\ref{num_r1} and~\ref{num_r2} illustrate the deviation between the ideal scenario—where every updated state from the quantum gradient algorithm is assumed to be perfectly captured by a PQC—and the scenario where the state is approximated by a PQC with added random noise, weighted by the magnitude of disturbance. In these cases, we chose $0.2$ as learning rate in simulation of quantum algorithm.

We examine the influence of disturbance magnitude on each iteration, as shown in sub-figures (a) of both scenarios. Snapshots at three disturbance magnitudes—0.02(0.02), 0.06(0.04), and 0.10(0.06)—are presented in sub-figures (b)–(d) of both figures. The magnitude of disturbance reflects the accuracy of the classical training process in approximating the quantum gradient state with a PQC. The results in Figures \ref{num_r1} and \ref{num_r2} indicate that the protocol remains robust under a certain level of disturbance, but if the approximation quality deteriorates significantly, convergence is disrupted. 

\begin{figure}[!ht]
   \centering
   \includegraphics[width=0.8\columnwidth]{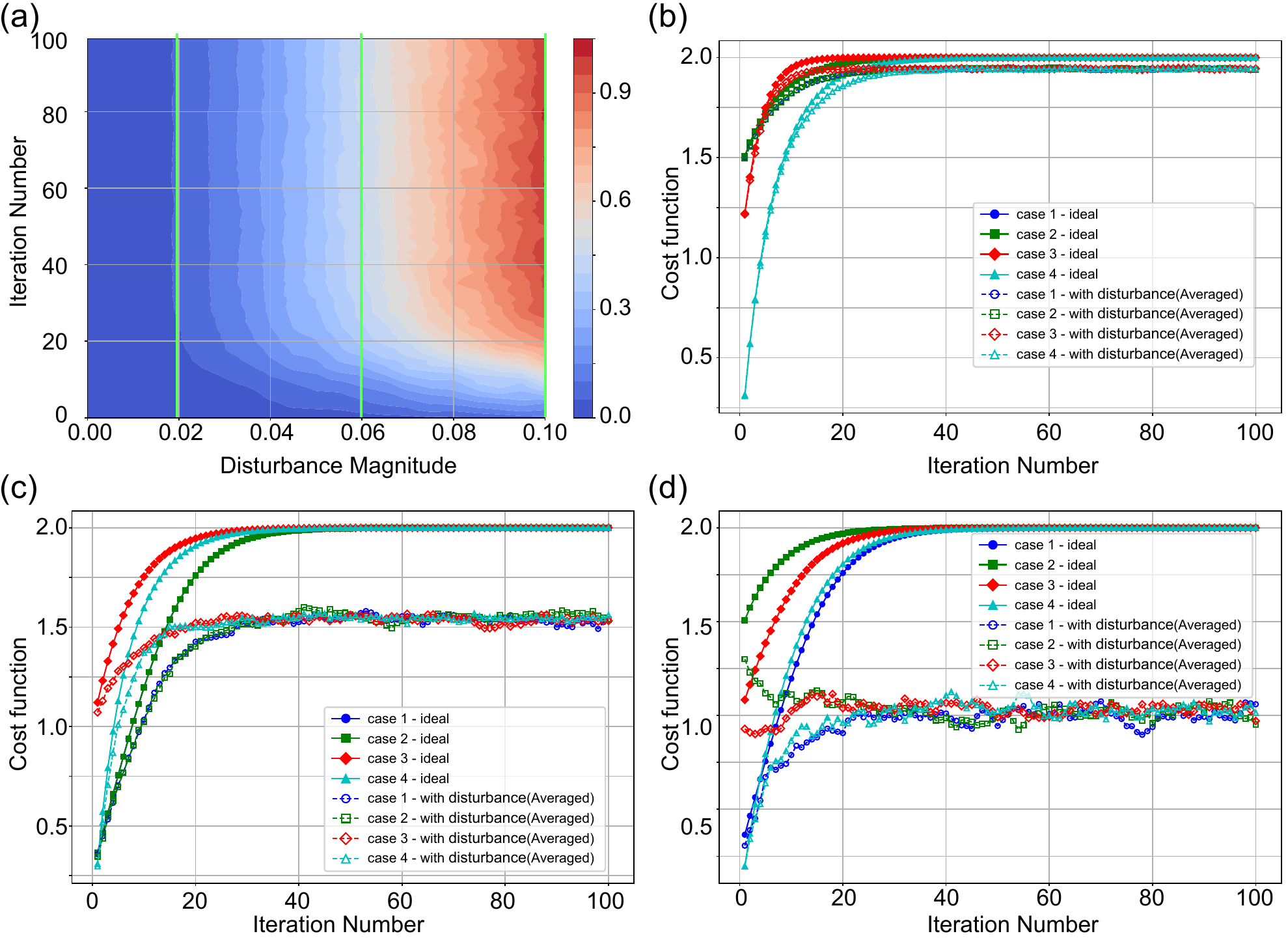}
   \caption{Influence of disturbance magnitude on the MaxCut problem: $4$ initial ansatze are tested. Sub-figure (a) shows the overall effect of varying disturbance magnitudes across all iterations. Sub-figures (b), (c), and (d) provide detailed snapshots at specific disturbance magnitudes of 0.01, 0.03, and 0.05, respectively. The results are averaged over 50 runs to ensure statistical reliability.} 
\label{num_r1} 
\end{figure}

\begin{figure}[!ht]
   \centering
   \includegraphics[width=0.8\columnwidth]{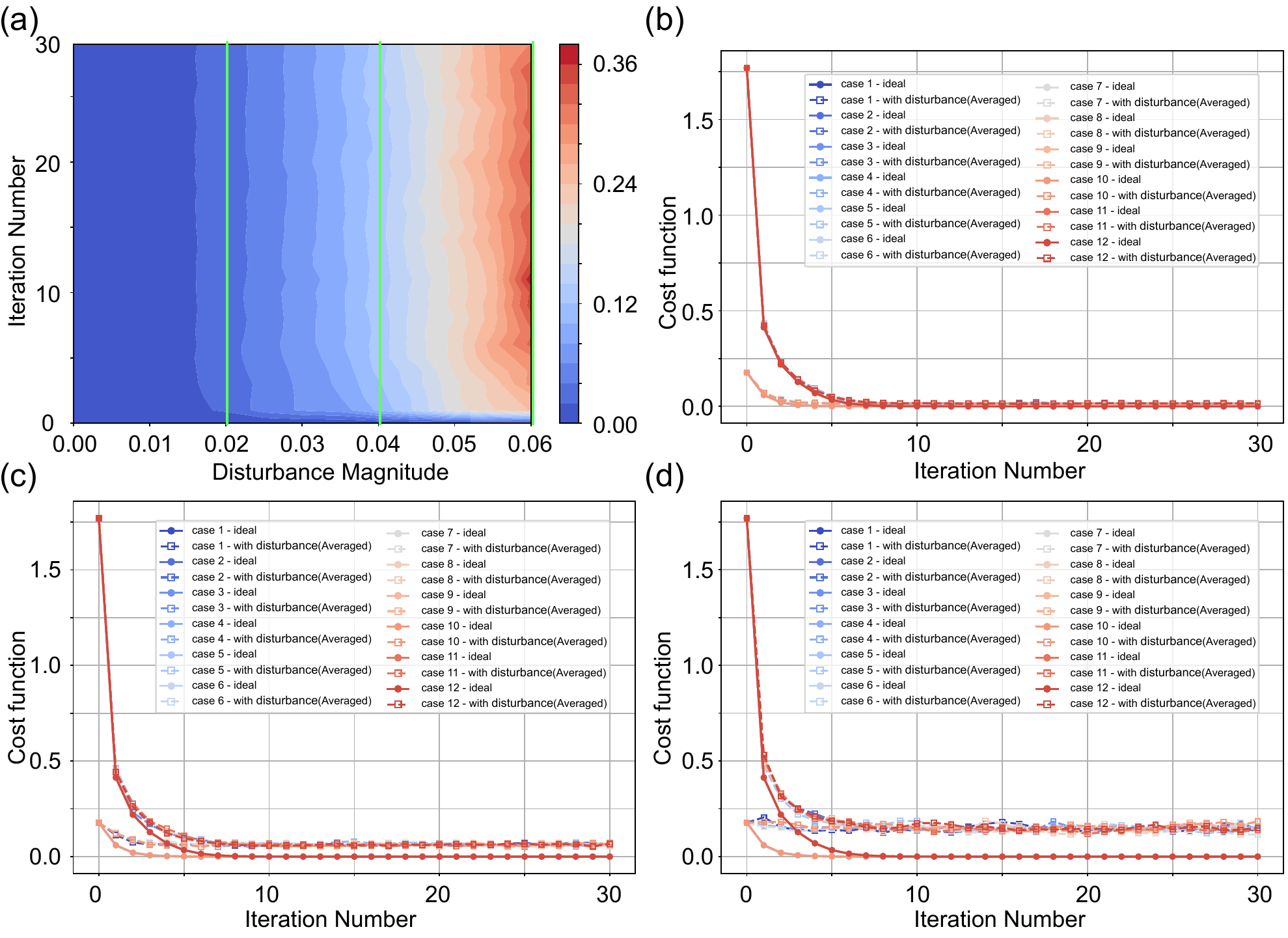}
   \caption{Influence of disturbance magnitude on polynomial optimization: 12 initial points are tested. (a) depicts the overall effect of varying disturbance magnitudes across all iterations.  (b), (c), and (d) present detailed snapshots at specific disturbance magnitudes of 0.01, 0.03, and 0.05, respectively. The results are averaged over 50 runs to ensure robustness.} 
\label{num_r2} 
\end{figure}

\section{Error Accumulation and Analysis}
\label{supp:error}
In this section, we will depicts the fact that the iterative method is insensitive to error of variables  produced either in preparation or readout.

\begin{proposition}
        The iterative gradient algorithm is insensitive to errors from an inaccurate quantum state. 
\end{proposition}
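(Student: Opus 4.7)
The plan is to track how an error $\ket{\epsilon_t}$ in the state at iteration $t$ propagates through one application of the iterative map $\ket{\bm{z}} \mapsto \ket{\bm{z}} - \xi \mathcal{D}(\bm{z})\ket{\bm{z}}$, and to argue that the resulting error at iteration $t+1$ remains of the same order whenever $\xi$ is chosen in the regime used by NOM. Concretely, I would write the noisy state as $\ket{\tilde{\bm{z}}_t} = \ket{\bm{z}_t} + \ket{\epsilon_t}$ with $\|\ket{\epsilon_t}\| \le \varepsilon$, and the noisy update as
\begin{equation}
\ket{\tilde{\bm{z}}'_{t+1}} \;=\; \ket{\tilde{\bm{z}}_t} - \xi\, \mathcal{D}(\tilde{\bm{z}}_t)\ket{\tilde{\bm{z}}_t}.
\end{equation}
The target bound is then $\|\ket{\tilde{\bm{z}}'_{t+1}} - \ket{\bm{z}'_{t+1}}\| \le (1 + O(\xi))\varepsilon$, which precludes exponential blow-up in a constant number of iterations.

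The first step is to split the propagated error into two contributions using the triangle inequality: (i) the identity part $\ket{\tilde{\bm{z}}_t} - \ket{\bm{z}_t} = \ket{\epsilon_t}$, contributing $\varepsilon$; and (ii) the gradient part $\xi \bigl(\mathcal{D}(\tilde{\bm{z}}_t)\ket{\tilde{\bm{z}}_t} - \mathcal{D}(\bm{z}_t)\ket{\bm{z}_t}\bigr)$. For (ii), I would add and subtract $\mathcal{D}(\bm{z}_t)\ket{\tilde{\bm{z}}_t}$ to decompose it into a term $\mathcal{D}(\bm{z}_t)\ket{\epsilon_t}$ bounded by $\|\mathcal{D}(\bm{z}_t)\|\,\varepsilon$, and a term $(\mathcal{D}(\tilde{\bm{z}}_t) - \mathcal{D}(\bm{z}_t))\ket{\tilde{\bm{z}}_t}$ that measures how sensitive the effective gradient operator is to a state perturbation.

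The second step — and the main technical obstacle — is to establish that $\bm{z} \mapsto \mathcal{D}(\bm{z})$ is Lipschitz in the trace norm of $\rho_{\bm{z}}$. Using the explicit form from Eq.~\eqref{eq:D}, I would expand
\begin{equation}
\mathcal{D}(\tilde{\bm{z}}_t) - \mathcal{D}(\bm{z}_t) = \operatorname{Tr}_{p-1}\!\left[\mathbb{I} \otimes \bigl(\rho_{\tilde{\bm{z}}_t}^{\otimes p-1} - \rho_{\bm{z}_t}^{\otimes p-1}\bigr)\mathcal{M_D}\right],
\end{equation}
and exploit the telescoping identity $A^{\otimes p-1} - B^{\otimes p-1} = \sum_{k=0}^{p-2} A^{\otimes k}\otimes(A-B)\otimes B^{\otimes p-2-k}$, together with $\|\rho_{\tilde{\bm{z}}_t} - \rho_{\bm{z}_t}\|_1 \le 2\varepsilon + O(\varepsilon^2)$, to conclude $\|\mathcal{D}(\tilde{\bm{z}}_t) - \mathcal{D}(\bm{z}_t)\| \le 2(p-1)\|\mathcal{M_D}\|\,\varepsilon + O(\varepsilon^2)$. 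The delicate point here is handling the tensor-product telescoping carefully so that the $(p-1)$ factor does not hide a worse dependence on $p$, and keeping the bound consistent with the bounded-norm assumptions already invoked in Result~\ref{res2}.

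The third step assembles the pieces: combining (i) and (ii) yields $\|\ket{\tilde{\bm{z}}'_{t+1}} - \ket{\bm{z}'_{t+1}}\| \le \bigl(1 + \xi\|\mathcal{D}\| + 2\xi(p-1)\|\mathcal{M_D}\|\bigr)\varepsilon + O(\xi\varepsilon^2)$. Since NOM selects $\xi$ so that $\xi\|\mathcal{D}\|$ is small (this is precisely the regime that keeps $T(\bm{\alpha})$ shallow in the "Managing the learning cost" discussion), the amplification factor per iteration is $1 + O(\xi)$, which gives robust propagation over the bounded iteration budget of the protocol. I would close by remarking that near a local minimum $\|\mathcal{D}\| \to 0$, so the bound becomes even tighter and the noise is effectively damped, consistent with the empirical robustness observed in Appendix~\ref{supp:simu}.
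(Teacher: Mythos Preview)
Your argument is correct as far as it goes, but it takes a different route from the paper and arrives at a weaker conclusion. The paper works classically in the variable space: it Taylor-expands $\nabla f(\bm{x}+\Delta)\approx\nabla f(\bm{x})+\nabla^2 f(\bm{x})\,\Delta$, so the perturbed update reads $\bm{x}'+\Delta'$ with $\Delta'=(I-\xi\,\nabla^2 f)\,\Delta$. The point is that near a minimum the Hessian is positive semidefinite, so for suitable $\xi$ the linear map $I-\xi\,\nabla^2 f$ is a contraction and the error is actively ``curved back''. Your operator-norm analysis instead applies the triangle inequality and a Lipschitz bound on $\bm{z}\mapsto\mathcal{D}(\bm{z})$, which discards the sign of the $-\xi$ term and yields an amplification factor $1+O(\xi)$ that is always at least $1$. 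That establishes bounded growth (no exponential blow-up over a finite iteration budget) but not the damping mechanism the paper highlights.

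One small overclaim: your closing remark that near a local minimum ``the noise is effectively damped'' does not follow from your own bound. Only the $\xi\|\mathcal{D}\|$ term vanishes there; the Lipschitz contribution $2\xi(p-1)\|\mathcal{M_D}\|$ is $\bm{z}$-independent and keeps the factor strictly above $1$. To recover genuine damping you would need to retain the signed first-order term rather than bounding it in absolute value---which is exactly what the paper's Hessian expansion does. In short: your approach is more careful about norms and the structure of $\mathcal{D}$, while the paper's is a one-line Taylor argument that is less rigorous but captures the contraction that actually justifies ``insensitive''.
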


    The inaccurate quantum state is caused by either inaccuracy of devices or optimization methods, which leads a perturbation on output variable. As the quantum gradient algorithm faithfully performs calculation of gradients and the iterative operations. 
    
    We directly analyze its classical behavior. The iterative equations are listed,
    \begin{eqnarray}\label{iteration_eq2}
       \bm{x}'\leftarrow \bm{x}- \xi \nabla_{\bm{x}}f(\bm{x}),
    \end{eqnarray}
    If $\bm{x}$ has a small perturbation $\bm{x}+\Delta$, Eq.~\eqref{iteration_eq2} becomes,
    \begin{eqnarray}\label{eq:d2}
       \bm{x}'\leftarrow && \bm{x}+\Delta - \xi \nabla_{\bm{x}}f(\bm{x}+\Delta)\nonumber \\
       = &&\bm{x}+\Delta - \xi (\nabla_{\bm{x}}f(\bm{x}) + \nabla^2_{\bm{x}}f(\bm{x}) \Delta),
    \end{eqnarray}
    where $\Delta- \xi \nabla^2_{\bm{x}}f(\bm{x}) \Delta$ implies the perturbation on updated point, which is curved back by $\xi \nabla^2_{\bm{x}}f(\bm{x}) \Delta$, denoted as $\Delta'$ (shown in Fig.\ref{grad_robustness}). This mitigation can be realized by adjusting $\xi$. And a special case is in the right-side of the figure with proper $\xi$ and strength of $\Delta$,
    \begin{eqnarray}
        \bm{x}\pm \xi \nabla_{\bm{x}}f(\bm{x}).
    \end{eqnarray}

\begin{figure}[!ht]
   \centering
   \includegraphics[width=0.6\columnwidth]{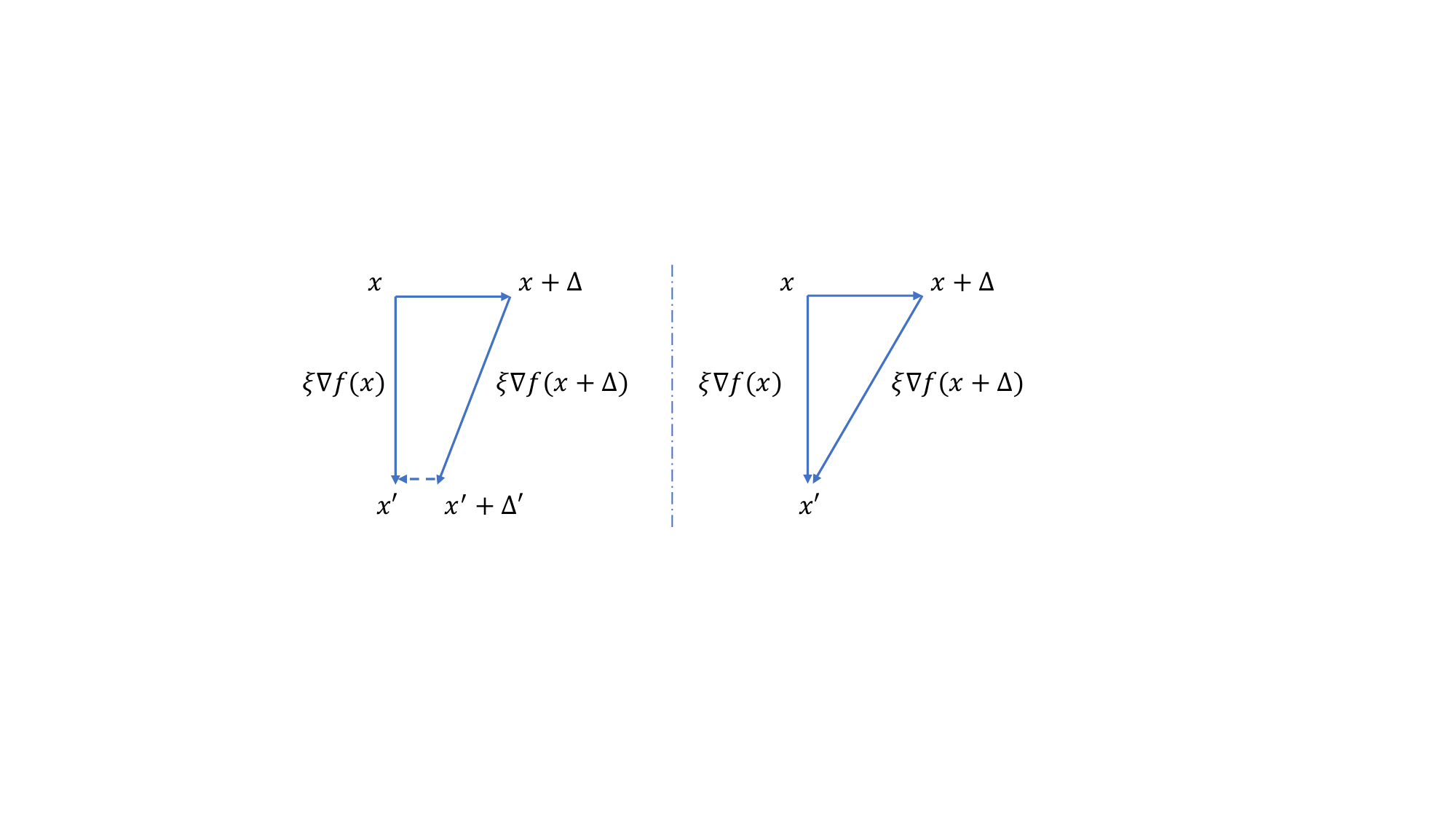}
   \caption{Gradient calculation with errors.} 
\label{grad_robustness} 
\end{figure}

\begin{proposition}
    if quantum gradient operator is inaccurate, the accumulated errors can be mitigated by properly setting learning rate at each iteration. It consists of error from current operator and states affected by previous accumulations.
\end{proposition}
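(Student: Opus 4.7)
The plan is to establish a one-step error-propagation inequality and then iterate it, isolating two contributions: the error intrinsic to the noisy gradient operator applied at the current step, and the error inherited from the state produced by the preceding iteration. Writing the ideal update as $\ket{\bm{z}_{t+1}} = \ket{\bm{z}_t} - \xi_t \mathcal{D}_t \ket{\bm{z}_t}$ and the realized update as $\ket{\tilde{\bm{z}}_{t+1}} = \ket{\tilde{\bm{z}}_t} - \xi_t \tilde{\mathcal{D}}_t \ket{\tilde{\bm{z}}_t}$, with $\tilde{\mathcal{D}}_t = \mathcal{D}_t + E_t$ and $\|E_t\| \le \eta_t$, I would define the accumulated state error $\delta_t \coloneqq \ket{\tilde{\bm{z}}_t} - \ket{\bm{z}_t}$ and derive the recursion
\begin{equation*}
\delta_{t+1} = (I - \xi_t \mathcal{D}_t)\,\delta_t \;-\; \xi_t E_t \ket{\bm{z}_t} \;-\; \xi_t E_t \delta_t,
\end{equation*}
by substituting and collecting terms. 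This already exposes the two advertised components: the middle term is the freshly injected error from the inaccurate operator, while the first and third depend on $\delta_t$ inherited from the past.

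Next I would take norms and apply the triangle inequality to obtain
\begin{equation*}
\|\delta_{t+1}\| \;\le\; \bigl(1 + \xi_t \|\mathcal{D}_t\| + \xi_t \eta_t\bigr)\|\delta_t\| \;+\; \xi_t \eta_t \|\ket{\bm{z}_t}\|.
\end{equation*}
Using $\|\mathcal{D}_t\| \le C_{\mathcal F}$, which is bounded via the assumptions of Result 2 on the decomposition of $\mathcal{F}$, and controlling $\|\ket{\bm{z}_t}\|$ by the normalisation invariants of the amplitude encoding in Eq.~\eqref{amp_encoding}, the recursion has the standard form $a_{t+1} \le (1+\xi_t \kappa_t) a_t + \xi_t \eta_t M$. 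Iterating gives
\begin{equation*}
\|\delta_T\| \;\le\; \prod_{s=0}^{T-1}(1 + \xi_s \kappa_s)\|\delta_0\| \;+\; M\sum_{t=0}^{T-1}\xi_t \eta_t \prod_{s=t+1}^{T-1}(1+\xi_s \kappa_s).
\end{equation*}
Choosing a learning-rate schedule satisfying $\sum_s \xi_s \kappa_s < \infty$ (e.g., $\xi_s = \mathcal{O}(1/s^{1+\alpha})$), the product $\prod_s(1+\xi_s\kappa_s)$ is bounded, and the additive term is a convergent weighted sum of per-step gradient errors. This is exactly the claimed decomposition: one contribution from the current step's $\eta_t$, and one from the cumulative action of all previous inaccuracies propagated through the factors $(I-\xi_s \mathcal{D}_s)$.

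To finish, I would combine the above state-error bound with the Lipschitz property of $f$ and of $\nabla f$ (polynomial of bounded degree on a compact domain) to translate $\|\delta_t\|$ into a bound on $|f(\tilde{\bm{z}}_t) - f(\bm{z}_t)|$, showing that the \emph{optimisation-level} error is likewise controlled. The main obstacle I anticipate is twofold. First, the amplification factor $(1 + \xi_t \|\mathcal{D}_t\|)$ can itself grow if $\mathcal{D}_t$ is evaluated at the perturbed state, so one must use a uniform bound on $\|\mathcal{D}\|$ over a neighbourhood of the trajectory; this is essentially the robustness geometric argument already sketched in Proposition~1 and Fig.~\ref{grad_robustness} and can be reused. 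Second, the cumulative term only converges if $\xi_t$ is decreased fast enough relative to $\eta_t$, which imposes a compatibility condition between the schedule and the per-iteration quantum estimation accuracy; identifying the sharpest such trade-off, rather than a loose $\mathcal{O}$-bound, is where most of the technical work would lie.
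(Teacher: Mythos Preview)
Your argument is a reasonable norm-based accumulation bound, but it misses the actual mitigation mechanism the paper invokes, and in doing so your linearisation is not quite the right one. The paper does not bound the propagated error by a factor $(1+\xi\|\mathcal{D}\|)$ and then kill accumulation with a summable schedule. Instead it Taylor-expands the \emph{gradient} at the perturbed point, $\nabla f(\bm{x}+\Delta)\approx\nabla f(\bm{x})+\nabla^2 f(\bm{x})\,\Delta$, so that the inherited error transforms as $(I-\xi\,\nabla^2 f)\,\Delta$ plus the fresh operator error $\xi\,\bm{e}$. The point is that near a minimum the Hessian is positive, so with $\xi$ chosen appropriately this factor is a genuine contraction: old error is ``curved back'' rather than merely bounded. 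Your recursion $(I-\xi_t\mathcal{D}_t)\delta_t$ uses the gradient \emph{operator} $\mathcal{D}_t$, not the Hessian; since $\mathcal{D}(\bm{z})$ is $\bm{z}$-dependent for $p>1$, the correct first-order difference $\mathcal{D}(\tilde{\bm{z}})\tilde{\bm{z}}-\mathcal{D}(\bm{z})\bm{z}=\nabla f(\tilde{\bm{z}})-\nabla f(\bm{z})$ is governed by $\nabla^2 f$, which is exactly what you flag as an obstacle but do not exploit.

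A second, practical gap: your proposed schedule $\xi_s=\mathcal{O}(1/s^{1+\alpha})$ makes $\sum_s\xi_s<\infty$, which does bound the product $\prod(1+\xi_s\kappa_s)$ but simultaneously prevents the underlying gradient descent from reaching a minimum (one needs $\sum_s\xi_s=\infty$ for convergence). The paper's Hessian-based contraction avoids this trade-off entirely: a \emph{fixed} small $\xi$ suffices both for descent and for damping inherited error. So while your route yields a valid worst-case accumulation estimate, it proves a weaker statement under a stronger (and somewhat self-defeating) hypothesis on the learning rate, and does not capture the self-correcting effect that the proposition is actually pointing at.
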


We consider an ideal situation that quantum state can be accurately prepared and measured. However, the gradient operator is inaccurate, which means,  $\Tilde{\nabla}_{\bm{x}}f(\bm{x})=\nabla_{\bm{x}}f(\bm{x})+\bm{e}_{\bm{x}}$. By employing the iterative equation, we have 
\begin{eqnarray}
    \bm{x}'\leftarrow \bm{x}- \xi (\nabla_{\bm{x}}f(\bm{x})+\bm{e}_{\bm{x}})
\end{eqnarray} 
which indicates $\Delta =\xi \bm{e}_{\bm{x}}$. For the next iteration, we use the Eq.~\eqref{eq:d2},
\begin{eqnarray}
    \bm{x}'\leftarrow \bm{x}-\xi\nabla_{\bm{x}}f(\bm{x})+ 
    \xi e_{\bm{x}}-\xi (\nabla^2_{\bm{x}}f(\bm{x})\xi e_{\bm{x}}+\bm{e}_{\bm{x}+\xi e_{\bm{x}}}),
\end{eqnarray}
where the error term $\xi e_{\bm{x}}-\xi(\nabla^2_{\bm{x}}f(\bm{x})\xi e_{\bm{x}}+\bm{e}_{\bm{x}+\xi e_{\bm{x}}})$ is generated and is an adaptive one. 

$\xi \bm{e}_{\bm{x}+\xi e_{\bm{x}}}$ is error from the current gradient operator, and 
$\xi e_{\bm{x}}-\xi(\nabla^2_{\bm{x}}f(\bm{x})\xi e_{\bm{x}})$ is the error in Eq.~\eqref{eq:d2}, which is error from state but is still via imperfection of gradient operator in previous iterations. By choosing proper $\xi$, this error can also be mitigated.

\section{Details on Feasibility}
\textit{Analysis of barren plateaus in classical training part---}
Classical gradient-based methods for training $T(\bm{\alpha})$ may encounter barren plateaus. In this section, we analyze this situation and clarify the mitigation for such barren plateaus.

We focus on $T(\bm{\alpha}) = \prod_{j=1}^{L} T_j(\alpha_j)$, where $T_j(\alpha_j)$ defines a single-layer circuit, and $L$ is finite due to the choice of $\xi$ that limits $\xi \mathcal{D}$. 

The partial derivative with respect to the $k$-th parameter for training $T(\bm{\alpha})$ is given by, 
\begin{equation}\label{sp:bp_gradient1}
    \frac{\partial c_2}{\partial \alpha_k } = i \bra{\bm{z}}T^{\dagger}_{-} \left[ V_k, ~ T^{\dagger}_{+} \ket{\bm{z}'}\bra{\bm{z}'} T_{+} \right] T_{-} \ket{\bm{z}},
\end{equation} 
where $V_k$ denotes the derivative of a single-layer circuit $T_k(\alpha_k)$, $[\cdot]$ represents the commutator, and 
\begin{eqnarray}
    T_{-} = \prod_{j=1}^{k} T_j(\alpha_j), \quad 
    T_{+} = \prod_{j=k+1}^{L} T_j(\alpha_j). \nonumber 
\end{eqnarray}

Let us observe the behavior at the initial moment. We initialize $T(\bm{\alpha}) = T_{+}T_{-}$ to the identity $I$. In this case, 
\begin{eqnarray}\label{sp:bp_gradient2}
\frac{\partial c_2}{\partial \alpha_k } 
 &=&i \bra{\bm{z}} \left( T^{\dagger}_{-}V_kT^{\dagger}_{+} \ket{\bm{z}'}\bra{\bm{z}'} -  \ket{\bm{z}'}\bra{\bm{z}'} T_{+}V_kT_{-} \right)  \ket{\bm{z}} \nonumber \\
 &=&i\bra{\bm{z}} \left[\ket{\bm{z}'}\bra{\bm{z}'}, ~ V_k \right]\ket{\bm{z}}, 
\end{eqnarray} 
where the second line holds due to the initialization $T_{+} = T_{-}^{\dagger} = I$. Thereby, Eq.~\eqref{sp:bp_gradient2} equals zero only when the target state commutes with the observable, which is generally not the case~\cite{grant2019initialization}. 

When $T_{+}$ and $T_{-}$ are not initialized from the identity, we have 
\begin{eqnarray}
\frac{\partial c_2}{\partial \alpha_k } 
 = i \bra{\bm{z}} \left( T^{\dagger}_{-}V_kT^{\dagger}_{+} \ket{\bm{z}'}\bra{\bm{z}'} -  \ket{\bm{z}'}\bra{\bm{z}'} T_{+}V_kT_{-} \right)  \ket{\bm{z}}. \nonumber 
\end{eqnarray} 
Given the precondition that the search region is centered around the identity, meaning 
$d(T(\bm{\alpha}), I)=d(T_{+},T_{-}^{\dagger}) $ is bounded by $ \epsilon_{f}$, then $|d(T_{+}, I)-d(T_{-},I)| < d_f < d(T_{+}, I) + d(T_{-},I)$.
Thus, without normalization, we have
\begin{eqnarray}
    T(\bm{\alpha})\ket{z} = \ket{z} + \epsilon U' \ket{z}, \nonumber \\
    T_{-}\ket{z} = \ket{z} + k_{-}\epsilon U' \ket{z}, \nonumber \\
    T_{+}\ket{z} = \ket{z} + k_{+}\epsilon U' \ket{z},
\end{eqnarray}
where $k_{+}$ and $k_{-}$ are related to the searching region of $T_{+}$ and $T_{-}$, restricted by their depth.
We can then transform the partial derivative into
\begin{eqnarray}
\frac{\partial c_2}{\partial \alpha_k } 
 &=& i \bra{\bm{z}} \left(  V_k \ket{\bm{z}'}\bra{\bm{z}'} -  \ket{\bm{z}'}\bra{\bm{z}'} V_k \right)  \ket{\bm{z}} \nonumber \\
 &+& ik_{+}\epsilon \bra{\bm{z}} \left( V_kU^{\dagger}_{+} \ket{\bm{z}'}\bra{\bm{z}'} -  \ket{\bm{z}'}\bra{\bm{z}'} U_{+}V_k \right)  \ket{\bm{z}} \nonumber \\
 &+& ik_{-}\epsilon \bra{\bm{z}} \left( U^{\dagger}_{-}V_k \ket{\bm{z}'}\bra{\bm{z}'} -  \ket{\bm{z}'}\bra{\bm{z}'} V_kU_{-} \right)  \ket{\bm{z}}. \nonumber \\
\end{eqnarray}
This shows that the major part of the partial derivative is Eq.~\eqref{sp:bp_gradient2}, can be preserved by maintaining a finite search region for $T(\bm{\alpha})$, thereby mitigating the effects of barren plateaus.

% During the optimization process, when $\xi \mathcal{D}$ is constrained, the depth of $T(\bm{\alpha})$ remains bounded, and the search region stays near the identity operation. Consequently, the partial derivative predominantly behaves as described in Eq.~\ref{sp:bp_gradient2}.

\end{widetext}

\end{document}